\newcounter{quote}
\hfill\parbox{\dimexpr \textwidth-2cm}%
{\centering\small\textit{\BODY}}%
\hfill\llap{(\thequote)}\vspace{1ex}\par}
\newtheorem{theorem}{Theorem}
\newtheorem{remark}{Remark}
\newtheorem{corollary}{Corollary}
\newtheorem{lemma}{Lemma}
\newcommand\skw{\operatorname{skw}}
\newcommand\Skw{\operatorname{Skw}}
\newcommand\sym{\operatorname{sym}}
\newcommand\grad{\operatorname{grad}}
\renewcommand\div{\operatorname{div}}
\newcommand\curl{\operatorname{curl}}
\newcommand\rot{\operatorname{rot}}
\newcommand\inc{\operatorname{inc}}
\newcommand\air{\operatorname{airy}}
\newcommand\deff{\operatorname{def}}
\newcommand\R{\mathbb{R}}
\begin{document}

\title{Poincar\'{e} path integrals for elasticity}

%\author{Snorre Harald Christiansen \and Kaibo Hu \and Espen Sande}

\author{Snorre H. Christiansen\thanks{Department of Mathematics, University of Oslo, PO Box 1053 Blindern, NO 0316 Oslo, Norway. 
              email:{\tt snorrec@math.uio.no}} \and Kaibo Hu\thanks{Corresponding author. School of Mathematics, University of Minnesota,
           206 Church St. SE, Minneapolis, MN, USA.  email: {\tt khu@umn.edu}}\and Espen Sande\thanks{Department of Mathematics, University of Oslo, PO Box 1053 Blindern, NO 0316 Oslo, Norway.
              email:{\tt espsand@math.uio.no}}}

\date{}

\maketitle

\begin{abstract}
We propose a general strategy to derive null-homotopy operators for differential complexes based on the Bernstein-Gelfand-Gelfand (BGG) construction and properties of the de Rham complex. Focusing on the elasticity complex, we derive path integral operators $\mathscr{P}$ for elasticity satisfying $\mathscr{D}\mathscr{P}+\mathscr{P}\mathscr{D}=\mathrm{id}$ and $\mathscr{P}^{2}=0$, where the differential operators $\mathscr{D}$ correspond to the linearized strain, the linearized curvature and the divergence, respectively. {In general we derive path integral formulas in the presence of defects.} As a special case, this gives the classical Ces\`{a}ro-Volterra path integral for strain tensors satisfying the Saint-Venant compatibility condition. %using the path integral for scalar potentials and techniques from homological algebra.
\end{abstract}
   
\noindent Keyworlds: homotopy operator, Ces\`{a}ro-Volterra path integral, Bernstein-Gelfand-Gelfand resolution, elasticity, defect

\section{Introduction}
Let $\Lambda^{k}(\Omega)$  be the space of smooth differential $k$-forms on an open domain $\Omega\subset\R^n$. The de Rham complex then reads
\begin{equation}\label{sequence:deRham}
\begin{diagram}
0& \rTo &\mathbb{R} & \rTo & \Lambda^{0}(\Omega) & \rTo^{d_{0}} & \Lambda^{1}(\Omega) & \rTo^{d_{1}} &  \cdots & \rTo^{d_{n-1}} &   \Lambda^{n}(\Omega) & \rTo & 0,
\end{diagram}
\end{equation}
where $d_{k}$, the $k$th exterior derivative, satisfies $d_kd_{k-1}=0$. In three space dimensions, $d_0$ corresponds to the gradient operator, $d_1$ corresponds to the curl and $d_2$ corresponds to the divergence.
It is well known that for the de Rham complex on a contractible domain, there exist null-homotopies, or Poincar\'e operators $\mathfrak{p}_k:\Lambda^{k}(\Omega)\to\Lambda^{k-1}(\Omega)$,
which satisfy
\begin{equation}\label{eq:nullhom}
\mathfrak{p}_{k+1}d_{k}+d_{k-1}\mathfrak{p}_k=\mathrm{id}_{\Lambda^{k}(\Omega)}.
\end{equation}
When it is clear from context it is common to drop the indices on both the exterior derivatives $d$ and the Poincar\'e operators $\mathfrak{p}$.
The existence of the  Poincar\'e operators implies the Poincar\'{e} lemma, i.e., that for any $k$-form $\omega$ satisfying $d_{k}\omega=0$, there exists, locally, a $(k-1)$-form $\phi$ such that $d_{k-1}\phi=\omega$. Using \eqref{eq:nullhom} we see that a choice of $\phi$ is $\phi=\mathfrak{p}_{k}\omega$.
In addition to being null-homotopies, the Poincar\'e operators also satisfy
 \begin{itemize}
 \item [(i)]
 the complex property: $\mathfrak{p}^2=\mathfrak{p}_{k}\mathfrak{p}_{k+1}=0$;
 \item [(ii)]
the polynomial preserving property: if $\omega$ is a homogeneous polynomial of degree $r$, then $\mathfrak{p}_k\omega$ is  a homogeneous polynomial of degree $r+1$. 
 \end{itemize}
The polynomial preserving property reflects the fact that the differential operators in the de Rham complexes are homogeneous first order operators.  

Due to the complex property, $\mathfrak{p}\phi=0$ provides a gauge condition for a potential $\phi$ in the following sense. For any $\omega\in \Lambda^{k}(\Omega)$ with $d_{k}\omega=0$, a potential $\phi\in \Lambda^{k-1}(\Omega)$ satisfying both $\mathfrak{p}_{k-1}\phi=0$ and $d_{k-1}\phi=\omega$, is uniquely determined and given by $\phi=\mathfrak{p}_{k}\omega$.

Furthermore the operators $\mathfrak{p}$ can be given an explicit representation in terms of path integrals, which has been important for many applications.  
Using these path integrals one can obtain the Koszul operators, a main tool in the construction of finite elements for scalar and vector field problems \cite{Arnold.D;Falk.R;Winther.R.2006a,hiptmair1999canonical}.
By averaging the base point of the Poincar\'{e} operators, Costabel and McIntosh \cite{costabel2010bogovskiui} constructed {Bogovski\u\i} type operators which they used to prove regularity results for the de Rham complex in Sobolev spaces. This leads to some very useful inequalities with applications in the analysis of electromagnetic problems and finite element methods (see, e.g., \cite{boffi2011discrete,bonito2013regularity}).  The homotopy identity, polynomial-preserving property and the complex property are important in these applications.

Let $\mathbb{V}$ and $\mathbb{S}$ be the linear space of vectors and symmetric matrices in three space dimensions and let $C^{\infty} (\Omega; \mathbb{V})$ and $C^{\infty} (\Omega; \mathbb{S})$ denote, respectively, the spaces of smooth vector- and symmetric-matrix-valued functions.
The linear elasticity complex in three space dimensions reads
\begin{align}\label{sequence:3Delasticity}
 \begin{diagram}
\mathrm{0} & \rTo^{}  &\mathrm{RM} & \rTo^{{\subseteq}} &  C^{\infty}(\Omega; \mathbb{V}) & \rTo^{\deff} &  C^{\infty}(\Omega;  \mathbb{S})  & \rTo^{\inc} &  C^{\infty}(\Omega; \mathbb{S}) & \rTo^{\div} &  C^{\infty}(\Omega; \mathbb{V}) & \rTo & 0,
 \end{diagram}
\end{align} 
with the differential operators in the vector form and index form
\begin{equation*}
\begin{aligned}
\deff {u}&:=\frac{1}{2}\left ( \nabla {u}+ {u}\nabla\right ),
&&(\deff {u})_{ij}=\frac{1}{2}\left (\partial_{i}{u}_{j}+\partial_{j}{u}_{i}\right ), 
&&&{u}\in C^{\infty}(\Omega; \mathbb{V}),
\\
\inc E &:= \nabla\times E\times \nabla, 
&&(\inc E)_{ij}=\epsilon_{ist}\epsilon_{jlm}\partial^{s}\partial^{l}E^{tm}, 
&&&E\in C^{\infty}(\Omega; \mathbb{S}), 
\\
\div V&:=\nabla\cdot V, 
&&(\div V)_{i}=\partial^{j}V_{ij}, 
&&&V\in C^{\infty}(\Omega; \mathbb{S}).
\end{aligned}
\end{equation*}
Here $\epsilon$ is the permutation tensor. The kernel of the linearized deformation operator $\deff$, i.e., $\mathrm{RM}:=\{{u}={a}+{b}\wedge {x}: {a}, {b}\in \mathbb{V}\}$, is called the space of rigid body motions. Given ${u}\in C^{\infty}(\Omega; \mathbb{V})$,  $\deff {u}$ is the symmetric gradient or (linearized) deformation \cite[p.~149]{Taylor2010}. Given $E\in  C^{\infty}(\Omega;  \mathbb{S}) $,  $\inc E :=  \nabla\times E\times \nabla$ is called the incompatibility of the strain (metric) tensor $E$, where $\nabla\times$ and $\times \nabla$ denote, respectively, the column-wise curl and the row-wise curl of a matrix field.

Kr\"{o}ner is one of the pioneers of relating the incompatibility of the strain tensor with defect densities of the material \cite{kroner1963dislocation, kroner1981continuum, van2010non}, and therefore \eqref{sequence:3Delasticity} is also referred to as the Kr\"{o}ner complex in the literature. We also refer to \cite{amstutz2016analysis,amstutz2017incompatibility} for the analysis and modeling of defects with the $\inc$ operator and to \cite{angoshtari2015differential,hackl1988existence,khavkine2017calabi} for applications of differential complexes in elasticity and geometry.
The elasticity complex \eqref{sequence:3Delasticity} has also been used to construct stable finite elements for the Hellinger-Reissner formulation of elasticity \cite[p. 121]{Arnold.D;Falk.R;Winther.R.2006a}.

Comparing the two complexes \eqref{sequence:deRham} and \eqref{sequence:3Delasticity} there are now two natural questions to ask:
\begin{itemize}
\item Does there exist Poincar\'e operators for the elasticity complex that satisfy a null-homotopy relation (analogous to \eqref{eq:nullhom}), the complex property and a polynomial preserving property?
\item If so, what are the explicit formulas, as path integrals, for them?
\end{itemize}
The main result of this paper is to provide a positive answer to both of these questions. Our approach is to use the Poincar\'e path integrals for the de Rham complex together with the Bernstein-Gelfand-Gelfand (BGG) resolution, a general construction that can be used to derive the elasticity complex from the de Rham complex \cite{arnold2006defferential,eastwood2000complex,falk2008finite}. We then obtain Poincar\'e path integrals for the elasticity complex.

We remark that of the three Poincar\'e path integrals for the elasticity complex, the first is already known: this is a result in the classical theory of linear elasticity that dates back to the work of Ces\`{a}ro in 1906 and Volterra in 1907
 \cite{cesaro1906sulle, volterra1907equilibre}. The two other Poincar\'e path integrals we derive, and that together provide the full sequence of null-homotopies, appear to be new. Recall that the symmetric strain tensor $E$ in elasticity satisfies the Saint Venant compatibility condition $\inc E = 0$ and one can then show that on a contractible domain $\Omega$, $E$ is the deformation of some displacement vector field ${u}$, i.e. $E=\deff {u}$.
Moreover, the displacement field can be recovered from the Ces\`{a}ro-Volterra formula:
\begin{align}\label{cesaro-volterra}
{u}_{i}({x})=\int_{{\gamma}({x})}\left ( E_{ij}(y)+\left ( \partial_{k}E_{ij}(y) - \partial_{i} E_{kj}(y)\right )\left ({x}_{k}-{y}_{k}\right )  \right ) \cdot d{y}_{j},
\end{align}
or equivalently in the vector form
\begin{align*}%\label{cesaro-volterra}
{u}({x})=\int_{{\gamma}({x})} E({y})+({x}-{y})\wedge \left ( \nabla\times E({y})\right )\cdot  d{y},
\end{align*}
where ${\gamma}({x})$ is any smooth path connecting ${x}$ to a fixed point ${x}_{0}$. 
The derivative term $\nabla\times E$ appearing in the Ces\`{a}ro-Volterra path integral \eqref{cesaro-volterra} is called the Frank tensor (c.f. \cite{van2012distributional,van2016frank}). On simply connected domains, the integral \eqref{cesaro-volterra} does not depend on the chosen path between fixed end points.

We note that there has been a lot of recent progress and applications of the  Ces\`{a}ro-Volterra path integral. Non-simply-connected bodies are considered in \cite{yavari2013compatibility}. A generalization to weaker regularity is given in \cite{ciarlet2010cesaro} and a generalization to surfaces is given in \cite{ciarlet2009cesaro}. Geometric reductions for plate models are derived in \cite{geymonat2007kinematics} based on asymptotic expansions of the Ces\`{a}ro-Volterra integral. A compatible-incompatible decomposition of symmetric tensors in $L^p$ is proved in \cite{maggiani2015comp}.
The intrinsic elasticity models use the strain tensor as the major variable, and the displacement can be recovered by the Ces\`aro-Volterra path integral \cite{ciarlet2007characterization,ciarlet2009intrinsic}. The Frank tensor appearing in \eqref{cesaro-volterra} can be used as a boundary term \cite{van2016frank,ciarlet2009intrinsic}.

The rest of the paper will be organized as follows. In Section \ref{sec:preliminary} we define the notation and recall the Poincar\'e path integrals for the de Rham complex. In Section \ref{sec:results} we present the new Poincar\'{e} and Koszul operators for the elasticity complex. In Section \ref{sec:BGG} we review the derivation of the elasticity complex from the de Rham complex via the BGG construction. In Section \ref{sec:derivation} we propose a new methodology to derive Poincar\'{e} operators based on the  BGG construction and derive the operators for the elasticity complex. In Section \ref{sec:2D} we present results for the 2D elasticity complex.  Concluding remarks are given in Section \ref{sec:conclusion}.

%%%%%%%%%%%%%%%%%%%%%
\section{Notation and Preliminaries}\label{sec:preliminary}
Let $\mathbb{V}:=\mathbb{R}^n$ denote the space of vectors in ${\mathbb{R}^{n}}$, $\mathbb{M}$ denote the space of $n\times n$ matrices and $\mathbb{S}$, $\mathbb{K}$ for the subspaces of symmetric and skew-symmetric matrices respectively. We further define the product space $\mathbb{W}:=\mathbb{K}\times \mathbb{V}$. Let $\Lambda^{k}(\Omega)$ be the space of smooth $k$-forms on $\Omega$, and $\Lambda^{k}( \Omega; \mathbb{E})$ be the space of smooth $\mathbb{E}$-valued $k$-forms, where $\mathbb{E}=\mathbb{V}, \mathbb{M}, \mathbb{S}, \mathbb{K}$ or $\mathbb{W}$.   Similar notations $C^{\infty}(\Omega)$ and $C^{\infty}(\Omega, \mathbb{E})$  are used to denote smooth functions and $\mathbb{E}$-valued smooth functions on $\Omega$ respectively. When it is clear from the context, we also omit $\Omega$ and simply write $\Lambda^{k}(\mathbb{E})$ or $C^{\infty}(\mathbb{E})$.  We use lower case Latin letters for vector valued functions and upper case Latin letters for matrix valued functions. Greek letters are used for forms.

%Similarly we can define symmetric and skew-symmetric matrix valued differential forms. In the vector proxy notation, this means that the coefficients in front of the form basis are (symmetric or skew-symmetric) matrices.
% The kernel of the linearized deformation operator $\deff$, $\mathrm{RM}:=\{u=a+b\wedge x: a, b\in \mathbb{V}\}$ is called the space of rigid body motions. 
 % We denote the  vector $x=(x_{1}, x_{2}, x_{3})$ in 3D.

We define $\mathcal{P}_{r}(\mathbb{M})$ to be the space of matrix valued polynomials of degree at most $r$, and define $\mathcal{H}_{r}(\mathbb{M})$ to be the subspace of homogeneous polynomials of degree $r$, i.e. $Q\in \mathcal{H}_{r}(\mathbb{M})$ implies $Q(tx)=t^{r}Q(x)$ for any $t\in \mathbb{R}$. We also define similar spaces for symmetric matrices $\mathbb{S}$ and skew-symmetric matrices $\mathbb{K}$.

The notation $\nabla\times $ denotes the curl operator. For $W\in C^{\infty}(\mathbb{M})$, it is important to distinguish curl operators acting on the left and on the right:  $\nabla\times W$ is defined to be the curl applied to each column while $W\times \nabla$ is the curl applied to each row. Using index notation, this means $(\nabla\times W)_{ij}=\epsilon_{i}^{~ab}\partial_{a}W_{bj}$ and 
$(W\times \nabla)_{ij}=\epsilon_{j}^{~ab}\partial_{a}W_{ib}$ where the Einstein summation convention has been used. As a standard notation, we use ${u}\otimes {v}$ to denote the tensor product of the vectors ${u}$ and ${v}$, i.e. $({u}\otimes {v})_{ij}={u}_{i}{v}_{j}$.
Similarly, for a matrix $W$ and a vector ${u}$ we will use ${u}\wedge W$ to denote the cross product from the left, meaning the cross product between ${u}$ and the columns of $W$ (which returns a matrix), and $W\wedge {u}$ to denote the cross product from the right, i.e., between the rows of $W$ and the vector ${u}$.

Let $\mathrm{i}_{{v}}: \Lambda^{k}(\Omega)\mapsto \Lambda^{k-1}(\Omega)$ be the contraction operator with respect to a vector field ${v}$, defined by
$$
\mathrm{i}_{{v}}\omega(\xi_2 \ldots, \xi_k) := \omega ({v}, \xi_2, \ldots, \xi_k), \quad \omega\in \Lambda^{k}(\Omega). 
$$
In $\mathbb{R}^{n}$, we use ${x}$ to denote the identity vector field.  The Poincar\'e operator  $\mathfrak{p}$ with respect to the origin can be defined explicitly on $k$-forms $\omega$ by:
\begin{equation}\label{eq:poinexpl}
(\mathfrak{p}_{k} \omega)_x(\xi_2 \ldots, \xi_k) := \int_0^1 t^{k-1}\left ( \mathrm{i}_{{x}}\omega\right )_{ tx} (\xi_2, \ldots, \xi_k)\, d t= \int_0^1 t^{k-1}\omega_{ tx} ({x}, \xi_2, \ldots, \xi_k)\, d t.
\end{equation}
In vector form, the 3D Poincar\'e operators read:
\begin{equation*}
\begin{aligned}
\mathfrak{p}_{1}{u}&=\int_{0}^{1}{u}_{tx}\cdot {x}\,dt, \quad &&\forall u\in C^{\infty}(\mathbb{V}),
\\
\mathfrak{p}_{2}{v}&=\int_{0}^{1}t{v}_{tx}\wedge {x}\,dt, \quad &&\forall v\in C^{\infty}(\mathbb{V}),
\\
\mathfrak{p}_{3}w&=\int_{0}^{1}t^{2}w_{tx} {x}\,dt, \quad &&\forall w\in C^{\infty}(\mathbb{R}),
\end{aligned}
\end{equation*}
which satisfy
\begin{equation}\label{deRham1}
\begin{aligned}
\mathfrak{p}_{1}\grad f&=f+C, \quad &&\forall f\in C^{\infty}(\mathbb{R}),
\\
\mathfrak{p}_{2}\curl {u}+\grad\mathfrak{p}_{1}{u}&={u}, \quad &&\forall {u}\in C^{\infty}(\mathbb{V}),
\\
\mathfrak{p}_{3}\div {v}+\curl\mathfrak{p}_{2}{v}&={v}, \quad &&\forall {v}\in C^{\infty}(\mathbb{V}),
\\
\div\mathfrak{p}_{3} w &= w, \quad &&\forall w\in C^{\infty}(\mathbb{R}).
\end{aligned}
\end{equation}
Here, the notation $u_{x}$ is used to denote $u$ evaluated at $x$, i.e., $u(x)$.
In \eqref{deRham1} $C=-f(0)$ indicates that the identity $\mathfrak{p}_{1}\grad f= f$ holds up to a constant. We refer to \cite{ciarlet2013linear} for more details on the Poincar\'{e} operators for the de Rham complex and their relation to the Poincar\'{e} lemma. 

The contraction of a differential form by ${x}$ is called the Koszul operator (associated with the origin), i.e., 
\begin{equation}
\kappa_{k} \ : \  \omega \mapsto \kappa_{k}  \omega := {\mathrm{i}_{{x}}\omega}, \quad \omega\in \Lambda^{k}(\Omega). 
\end{equation}
The Koszul operators can be used to simplify the construction of some classical finite elements \cite[p. 29]{Arnold.D;Falk.R;Winther.R.2006a}.

Let $dx_{1}, dx_{2}, \cdots, dx_{n}$ be the canonical dual bases of $\mathbb{R}^{n}$. Then $dx_{\sigma_{0}}\wedge dx_{\sigma_{1}}\wedge \cdots \wedge dx_{\sigma_{k}}$ for all $0\leq \sigma_{0}<\sigma_{1}< \cdots < \sigma_{k}\leq n$ form a canonical basis for the vector space of alternating $k$-forms, i.e., any $\omega\in \Lambda^{k}$ can be written as
\begin{equation}\label{form-coefficient}
\omega=\sum_{0\leq \sigma_{1}< \cdots < \sigma_{k}\leq n}a_{\sigma}dx_{\sigma_{1}}\wedge dx_{\sigma_{2}}\wedge \cdots \wedge dx_{\sigma_{k}},
\end{equation}
for a unique choice of coefficients $a_{\sigma}\in \mathbb{R}$ \cite[p. 26]{Arnold.D;Falk.R;Winther.R.2006a}.

Let $\mathcal{H}_r\Lambda^k(\Omega)$ denote the space of $k$-forms with components ($a_{\sigma}$ in \eqref{form-coefficient}) that are homogeneous polynomials of degree $r$. Then from \eqref{eq:poinexpl} we have, for any $\omega\in\mathcal{H}_r\Lambda^k(\Omega)$, that
\begin{equation*}
\mathfrak{p}_{k} \omega = \frac{1}{ k + r} \kappa_{k} \omega\in \mathcal{H}_{r+1}\Lambda^{k-1}(\Omega).
\end{equation*}
Using the null-homotopy relation for Poincar\'e operators \eqref{eq:nullhom} we further have, for any $\omega\in\mathcal{H}_r\Lambda^{k}(\Omega)$, that
\begin{equation}
( d_{k-1}  \kappa_{k} + \kappa_{k+1}  d_{k} ) \omega = (r+k)\omega.
\end{equation}
Lastly, we remark that similar to the Poincar\'e operators, the Koszul operators also satisfy the complex property: $\kappa^2=0$.

The Poincar\'e and Koszul operators can also be defined with respect to another base point $x_{0}$, rather than the origin $0$. In this case, one replaces ${x}$ by ${x}-{x}_{0}$ in the contraction. To simplify the exposition we will in the remainder of this paper make the choice ${x}_{0}=0$ for the base point of our Poincar\'e and Koszul operators.

 %Calabi \cite{calabi1961compact} constructed a complex on manifolds with constant sectional curvature. The operators are the Lie derivatives of the metric, the linearized curvature and the divergence respectively. In the Euclidean space $\mathbb{R}^{3}$, the Calabi complex considered in \cite{calabi1961compact} degenerates to \eqref{sequence:3Delasticity} (c.f. \cite{angoshtari2015differential, angoshtari2013linear, khavkine2017calabi}). %Numerical methods for nonlinear elasticity based on differential complexes were developed in \cite{angoshtari2017compatible}.
 % An application of the complex \eqref{sequence:3Delasticity} to the Regge calculus, a discrete theory of  general relativity, can be found in Christiansen \cite{christiansen2011linearization}.

%%%%%%%%%%%%%%%%

%%%%%%%%%%%%%%%%%%%%%

\section{Main results: Poincar\'{e} and Koszul operators for the elasticity complex}\label{sec:results}

In this section we state our main results for the elasticity complex. The proof of Theorem \ref{thm:1} is postponed to Section 5.  We remark again that the last two Poincar\'e operators in Theorem \ref{thm:1} are, as far as we know, new.% 

\subsection{Poincar\'{e} operators}

\begin{theorem}\label{thm:1}
Let $\Omega=\mathbb{R}^{3}$ and let $\mathscr{P}_{1}: C^{\infty}(\Omega; \mathbb{S})\mapsto C^{\infty}(\Omega; \mathbb{V})$ be given by
$$
\mathscr{P}_{1} (E):= \int_{0}^{1}E_{tx}\cdot {x}\,dt +\int_{0}^{1}(1-t){x}\wedge (\nabla\times E_{tx})\cdot {x}\, dt,
$$
let $\mathscr{P}_{2}: C^{\infty}(\Omega; \mathbb{S})\mapsto C^{\infty}(\Omega; \mathbb{S})$ be given by
\begin{align*}
\mathscr{P}_{2}(V):= {x}\wedge \left (\int_{0}^{1}t(1-t)V_{tx}\, dt \right )\wedge {x},
\end{align*}
and let $\mathscr{P}_{3}: C^{\infty}(\Omega; \mathbb{V})\mapsto C^{\infty}(\Omega; \mathbb{S})$ be given by
\begin{align*}
\mathscr{P}_{3}({v}):=  \sym\left ( \int_{0}^{1}t^{2}{x}\otimes {v}_{tx}\,  dt -\left ( \int_{0}^{1}t^{2}(1-t){x}\otimes {v}_{tx}\wedge {x}\, dt\right )\times \nabla\right ).
\end{align*}
Then we have
\begin{equation}\label{RM}
\begin{aligned}
\mathscr{P}_{1}(\deff {u})&={u}+\mathrm{RM}, \quad&&\forall {u}\in C^{\infty}(\Omega; \mathbb{V}),
\\
\mathscr{P}_{2}\inc E + \deff \mathscr{P}_{1}E&=E, \quad &&\forall E\in C^{\infty}(\Omega; \mathbb{S}),
\\
\mathscr{P}_{3}\div V +\inc\mathscr{P}_{2}V&=V,\quad &&\forall V\in C^{\infty}(\Omega; \mathbb{S}),
\\
\div \mathscr{P}_{3}{v}&={v}, \quad&&\forall {v}\in C^{\infty}(\Omega; \mathbb{V}),
\end{aligned}
\end{equation}
where $\mathrm{RM}$ in \eqref{RM} indicates that the identity $\mathscr{P}_{1}(\deff {u})={u}$ holds up to rigid body motion, i.e. the kernel of $\deff$. %Given $u$, the rigid body motion in \eqref{RM} can be expressed by $u$ and its rotation at the origin by $u_{0}+1/2(\curl u)_{0}\wedge x$.
Particularly, for a  symmetric matrix valued function $E$ satisfying $\inc E=0$, we have (the Ces\`{a}ro-Volterra path integral)
$$
E=\deff\left (  \mathscr{P}_{1}E\right ),
$$
and for a symmetric matrix valued function $V$ satisfying $\div V=0$, we have
$$
V=\inc\left ( \mathscr{P}_{2}V\right ).
$$
\end{theorem}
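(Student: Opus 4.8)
The plan is to establish the four homotopy identities in \eqref{RM}; the two displayed special cases are then immediate, since $\inc E=0$ turns the second identity into $E=\deff\mathscr{P}_1 E$ and $\div V=0$ turns the third into $V=\inc\mathscr{P}_2 V$. The whole argument rests on the de Rham identities \eqref{deRham1}, exploiting the fact that each of $\mathscr{P}_1,\mathscr{P}_2,\mathscr{P}_3$ is assembled from the scalar and vector Poincar\'e operators $\mathfrak{p}_1,\mathfrak{p}_2,\mathfrak{p}_3$ acting row- or column-wise, composed with the algebraic operations (cross products with $x$, the tensor product $x\otimes(\cdot)$, and $\sym$). For instance, the first term of $\mathscr{P}_1(E)$ is exactly $\mathfrak{p}_1$ applied to the rows of $E$, while its second term is the Frank-tensor correction proportional to $x\wedge(\nabla\times E)$ needed to render the output compatible with $\deff$.

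I would verify the identities from the two endpoints inward, since these are closest to the de Rham case. For $\div\mathscr{P}_3{v}={v}$ I would use that the row-wise divergence $\div$ annihilates the untransposed row-wise curl $(\cdot)\times\nabla$ (one contracts the symmetric second derivative against the antisymmetric permutation tensor), so that $\div\mathscr{P}_3{v}$ collapses to the symmetrized $x\otimes{v}_{tx}$ term together with the transpose contribution, which combine under the $t$-integration into the scalar identity $\div\mathfrak{p}_3 w=w$ applied componentwise. For $\mathscr{P}_1(\deff{u})={u}+\mathrm{RM}$, the classical Ces\`{a}ro--Volterra statement, I would substitute $E=\deff{u}$, apply the row-wise version of $\mathfrak{p}_1\grad=\mathrm{id}$ to the first term, and use the standard expression of $\nabla\times\deff{u}$ in terms of $\nabla{u}$ for the correction term; the two integrals then recombine to ${u}$ up to an element of $\ker\deff=\mathrm{RM}$, the weights $1$ and $(1-t)$ being chosen so that the $t$-integration matches.

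For the two middle identities, which are the crux, I would prefer the structural Bernstein--Gelfand--Gelfand (BGG) route over brute force. One realizes the elasticity complex \eqref{sequence:3Delasticity} inside de Rham complexes valued in $\mathbb{V}$, $\mathbb{K}$ and $\mathbb{W}=\mathbb{K}\times\mathbb{V}$, connected by an algebraic map $S$ built from the contraction and skew/symmetric projections, and recovers $\deff,\inc,\div$ by a diagram chase. Since the de Rham Poincar\'e operators act componentwise and satisfy $d\mathfrak{p}+\mathfrak{p}d=\mathrm{id}$, transporting this homotopy through the same diagram chase (in the style of a homological perturbation argument) produces operators obeying $\mathscr{P}\mathscr{D}+\mathscr{D}\mathscr{P}=\mathrm{id}$ modulo $\mathrm{RM}$; the polynomial weights $(1-t)$, $t(1-t)$ and $t^2(1-t)$ then arise precisely from the commutators $[\mathfrak{p},S]$ between the Poincar\'e operator and the algebraic connecting maps, and the complex property $\mathscr{P}^2=0$ together with polynomial preservation is inherited from the corresponding properties of $\mathfrak{p}$ and $\kappa$.

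The main obstacle I expect is the second identity $\mathscr{P}_2\inc E+\deff\mathscr{P}_1 E=E$. Here $\inc$ is second order and $\mathscr{P}_2$ carries the double contraction $x\wedge(\cdot)\wedge x$, so expanding $\mathscr{P}_2\inc E$ generates several terms at different homogeneity degrees (whence the weight $t(1-t)$), and one must show that after integrating in $t$ exactly the part of $E$ not already recovered by $\deff\mathscr{P}_1 E$ remains. In the BGG picture this is the statement that the commutators $[\mathfrak{p},S]$ close up, i.e. that the leftover terms in the zig-zag telescope to the identity; securing this cancellation, rather than handling the endpoint identities, is where the delicate bookkeeping lies.
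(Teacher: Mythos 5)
Your proposal is correct and follows essentially the same route as the paper: the paper proves all four identities at once by transporting the de Rham homotopy through the BGG diagram, setting $\mathscr{B}_{k}=\bigl(\begin{smallmatrix}\mathfrak{p}_{k} & -T_{k}\\ 0 & \mathfrak{p}_{k}\end{smallmatrix}\bigr)$ with $T_{k}:=\mathfrak{p}_{k}K_{k}-K_{k-1}\mathfrak{p}_{k}$, verifying $\mathscr{A}_{k-1}\mathscr{B}_{k}+\mathscr{B}_{k+1}\mathscr{A}_{k}=\mathrm{id}$, and then pushing $\mathscr{B}$ down through the commuting projections (your Lemma-on-subcomplexes step) and vector proxies --- exactly the ``homological perturbation'' transport you describe, including your observation that the weights $(1-t)$, $t(1-t)$, $t^{2}(1-t)$ come from the commutator correction (though note it is the commutator of $\mathfrak{p}$ with the algebraic map $K\omega=x\otimes\omega-\omega\otimes x$, i.e.\ the analogue of $S=[d,K]$, rather than $[\mathfrak{p},S]$ itself). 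The only divergence is cosmetic: you propose to check the two endpoint identities by direct row-/column-wise de Rham computations, whereas the paper obtains them from the same uniform BGG transport, so no separate verification is needed.
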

%%%%%%%%%%%%%%%%%%%%%5
\begin{proof}
See Section 5.
\end{proof}
The above integrals are with respect to a special path ${\gamma}: t\mapsto tx$, connecting the base point $0$  with $x$. Since the Poincar\'{e} operators for the de Rham complex can be defined along an arbitrary path, we can also derive corresponding operators for the elasticity complex on a general path by following the BGG steps in the next two sections. Observe that by choosing the special path ${\gamma}: t\mapsto tx$ in the Ces\`ar{o}-Volterra formula \eqref{cesaro-volterra} we see that it coincides with the operator $\mathscr{P}_{1}$.

\begin{theorem}\label{thm:p2}
The Poincar\'{e} operators derived above satisfy the complex property $\mathscr{P}^2=0$.
\end{theorem}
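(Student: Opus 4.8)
The goal is to establish the two complex identities $\mathscr{P}_1\mathscr{P}_2=0$ and $\mathscr{P}_2\mathscr{P}_3=0$. The plan is to verify them directly from the explicit formulas, after one simplifying reformulation of the middle operator. Writing $[x]_{\times}$ for the skew matrix with $[x]_{\times}v=x\wedge v$, one checks from the index definitions of the left and right cross products that $x\wedge W\wedge x=[x]_{\times}W[x]_{\times}$ for every matrix field $W$, so that
\[
\mathscr{P}_2(V)=[x]_{\times}\Big(\int_0^1 t(1-t)V_{tx}\,dt\Big)[x]_{\times}.
\]
Since $[x]_{\times}x=0$, this representation makes the annihilation properties essentially transparent, and I would build both identities around it.

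For $\mathscr{P}_1\mathscr{P}_2=0$, set $E=\mathscr{P}_2(V)=[x]_{\times}M[x]_{\times}$ with $M(x)=\int_0^1 t(1-t)V_{tx}\,dt$. The first term of $\mathscr{P}_1$ vanishes pointwise, since $E_{tx}\cdot x=t^2[x]_{\times}M(tx)\big([x]_{\times}x\big)=0$. It then remains to show that the curl term $\int_0^1(1-t)\,x\wedge\big((\nabla\times E)_{tx}\cdot x\big)\,dt$ vanishes. Here I would expand $\nabla\times\big([y]_{\times}M[y]_{\times}\big)$ by the product rule, tracking separately the derivatives falling on the two factors $[y]_{\times}$ and on $M$, then contract on the right with $x$ and cross on the left with $x$. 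Most terms die immediately through the identities $[x]_{\times}x=0$ and $\epsilon_{jab}x_jx_b=0$, and the surviving terms should cancel after integrating against $(1-t)$, using integration by parts in $t$ in the standard way for Poincar\'e-operator identities. This curl computation is the main obstacle.

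For $\mathscr{P}_2\mathscr{P}_3=0$, the sandwich form of $\mathscr{P}_2$ reduces the claim to showing $[x]_{\times}\,(\mathscr{P}_3 v)_{tx}\,[x]_{\times}=0$ after the $t$-integration. The first, purely algebraic, term of $\mathscr{P}_3 v$ has the shape $\sym(x\otimes a)$, and a short computation gives $[x]_{\times}\sym(x\otimes a)[x]_{\times}=0$ pointwise, again because $[x]_{\times}x=0$; this disposes of it. The obstacle is once more the derivative term, namely the contribution of $\big(\int_0^1 t^2(1-t)\,x\otimes v_{tx}\wedge x\,dt\big)\times\nabla$: one must expand the row-wise curl $(\cdot)\times\nabla$, take the symmetric part, and verify that the sandwich $[x]_{\times}(\cdot)[x]_{\times}$ annihilates it after integration.

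Finally, I would record an alternative, more structural route that sidesteps the explicit curl computations. Composing the second homotopy identity of Theorem \ref{thm:1} at $E=\mathscr{P}_2V$ with the third yields the relation $\deff\,\mathscr{P}_1\mathscr{P}_2V=\mathscr{P}_2\mathscr{P}_3\,\div V$; hence once $\mathscr{P}_2\mathscr{P}_3=0$ is established, $\mathscr{P}_1\mathscr{P}_2V$ lies in $\mathrm{RM}=\ker\deff$, and it remains only to check that this rigid-body-motion component vanishes. More conceptually, since the operators $\mathscr{P}$ are assembled in Section \ref{sec:derivation} from the de Rham Poincar\'e operators $\mathfrak{p}$, which satisfy $\mathfrak{p}^2=0$, together with the Koszul operators satisfying $\kappa^2=0$, one expects $\mathscr{P}^2=0$ to follow by propagating these two nilpotency properties through the BGG diagram chase; whether this is cleaner than the direct verification depends on the precise form of the construction in Section \ref{sec:derivation}.
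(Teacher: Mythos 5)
Your strategy---direct verification from the explicit formulas, using ${x}\wedge{x}=0$ to kill the algebraic terms---is the same as the paper's, and your treatment of the non-derivative terms is correct. But the proof has a genuine gap: in both identities you stop precisely where the substance lies, namely at the two curl computations, which you label ``obstacles'' and leave unexecuted. Those computations \emph{are} the paper's proof. Concretely, the paper establishes the pointwise algebraic identities
\begin{align*}
({x}\otimes {v}\wedge {x})\times \nabla&=3{x}\otimes {v} - {v}\otimes {x} + {x}\otimes \nabla_{{x}}{v}-(\nabla\cdot {v})\,{x}\otimes {x},\\
\nabla\times ({x}\wedge V\wedge {x})&=-3V\wedge {x}-({x}\cdot \nabla)V\wedge {x} +{x}\otimes \left (  (\div V)\wedge {x}\right )+{x}\otimes \operatorname{vec}\skw V.
\end{align*}
In the first, every term has the form ${x}\otimes a$ or $a\otimes {x}$, so the sandwich ${x}\wedge(\cdot)\wedge{x}$ annihilates it pointwise, giving $\mathscr{P}_{2}\mathscr{P}_{3}=0$. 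In the second, every term is either of the form $W\wedge{x}$ or ${x}\otimes a$ with $a\cdot{x}=0$---the last term only because $V$ is symmetric, which is where symmetry enters and which you do not flag---so contracting with ${x}$ gives zero pointwise, and $\mathscr{P}_{1}\mathscr{P}_{2}=0$ follows. In particular your expectation that the surviving terms ``cancel after integrating against $(1-t)$, using integration by parts in $t$'' points in the wrong direction: no integration by parts is needed, because the integrands vanish identically.

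Your two fallback routes do not close the gap either. The homotopy identities give $\deff\,\mathscr{P}_{1}\mathscr{P}_{2}V=\mathscr{P}_{2}\mathscr{P}_{3}\div V$, which reduces $\mathscr{P}_{1}\mathscr{P}_{2}=0$ (still modulo a rigid motion you would have to eliminate separately) to $\mathscr{P}_{2}\mathscr{P}_{3}=0$; but the analogous manipulation one step further only yields $\inc\mathscr{P}_{2}\mathscr{P}_{3}{v}=0$, not $\mathscr{P}_{2}\mathscr{P}_{3}{v}=0$, so the chain never bottoms out. As for propagating nilpotency through the BGG diagram: the lifted operators do satisfy $\mathscr{B}^{2}=0$, since $\mathfrak{p}_{k}T_{k+1}+T_{k}\mathfrak{p}_{k+1}=0$ follows from $\mathfrak{p}^{2}=0$ and $T_{k}=\mathfrak{p}_{k}K_{k}-K_{k-1}\mathfrak{p}_{k}$; however, the projections in Lemma~\ref{lem:general-commuting} only satisfy $\Pi_{i}\Pi_{i}^{\dagger}=\mathrm{id}$, not $\Pi_{i}^{\dagger}\Pi_{i}=\mathrm{id}$, so $\tilde{\mathfrak{p}}_{i}\tilde{\mathfrak{p}}_{i+1}=\Pi_{i-1}\mathfrak{p}_{i}(\Pi_{i}^{\dagger}\Pi_{i})\mathfrak{p}_{i+1}\Pi_{i+1}^{\dagger}$ does not collapse and $\mathscr{P}^{2}=0$ is not automatic. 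It must be, and in the paper is, checked by the explicit identities above.
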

\begin{proof}

We find from a straightforward calculation:
\begin{align}\label{curlR-3D}
({x}\otimes {v}\wedge {x})\times \nabla=3{x}\otimes {v} - {v}\otimes {x} + {x}\otimes \nabla_{{x}}{v}-(\nabla\cdot {v}){x}\otimes {x}.
\end{align}
Using \eqref{curlR-3D} and the fact that ${x}\wedge {x}=0$, we have the identity
$$
\mathscr{P}_{2}\mathscr{P}_{3}=0.
$$
Lastly, we find that
$$
\nabla\times ({x}\wedge V\wedge {x})=-3V\wedge {x}-({x}\cdot \nabla)V\wedge {x} +{x}\otimes \left (  (\div V)\wedge {x}\right )+{x}\otimes \operatorname{vec}\skw V,
$$
which implies that $\nabla\times ({x}\wedge V\wedge {x})\cdot {x}=0$ if $V$ is symmetric. Here $\operatorname{vec}: C^{\infty}(\mathbb{K})\mapsto C^{\infty}(\mathbb{V})$ is the canonical identification between a vector and a skew-symmetric matrix (see \eqref{Skw} where we define the inverse identification) and $\skw: C^{\infty}(\mathbb{M})\mapsto C^{\infty}(\mathbb{K})$ defined by $\skw V=1/2(V-V^{T})$ is the skew-symmetrization operator.  Therefore 
$$
\mathscr{P}_{1}\mathscr{P}_{2}=0.
$$
\end{proof}

\begin{theorem}
The Poincar\'{e} operators defined above are polynomial preserving: 
$$
E\in \mathcal{H}_{r}(\mathbb{S})\Rightarrow \mathscr{H}_{1}E\in \mathcal{H}_{r+1}(\mathbb{V}), 
\quad V\in \mathcal{H}_{r}(\mathbb{S})\Rightarrow \mathscr{H}_{2}V\in \mathcal{H}_{r+2}(\mathbb{S}), 
\quad {v}\in \mathcal{H}_{r}(\mathbb{V})\Rightarrow \mathscr{H}_{3}{v}\in \mathcal{H}_{r+1}(\mathbb{S}).
$$
\end{theorem}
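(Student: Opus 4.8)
The plan is to exploit homogeneity directly in the explicit path-integral formulas of Theorem~\ref{thm:1}, reducing every $t$-integral to an elementary Euler (Beta-function) integral with a finite value, and then to read off the polynomial degree of the resulting purely algebraic expression in $x$. First I would record the basic scaling fact: if $f$ is a homogeneous polynomial of degree $s$, then $f_{tx}=t^{s}f_{x}$, so that all $t$-dependence pulls out of each integrand as a power of $t$. I would also isolate the two degree rules governing the algebra: each algebraic operation appearing in the formulas---the contractions $(\cdot)\cdot x$, the left and right cross products $x\wedge(\cdot)$ and $(\cdot)\wedge x$, and the tensor product $x\otimes(\cdot)$---raises homogeneity by $1$ for each factor of $x$ it introduces, while $\sym$ is algebraic and preserves degree; and each first-order differential operation, $\nabla\times(\cdot)$ or $(\cdot)\times\nabla$, lowers homogeneity by $1$.

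I would then treat the three operators in turn. For $\mathscr{P}_{1}$ with $E\in\mathcal{H}_{r}(\mathbb{S})$: in the first term $E_{tx}=t^{r}E_{x}$ gives $\int_{0}^{1}t^{r}\,dt=1/(r+1)$ and leaves $E_{x}\cdot x$, of degree $r+1$; in the second term $\nabla\times E$ is homogeneous of degree $r-1$, so after evaluating at $tx$ the integrand carries a fixed power of $t$ times $(1-t)$, the scalar integral $\int_{0}^{1}t^{a}(1-t)\,dt$ is a finite constant, and the surviving expression $x\wedge(\nabla\times E)_{x}\cdot x$ has degree $(r-1)+2=r+1$ (and for $r=0$ this term simply vanishes). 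Hence $\mathscr{P}_{1}E\in\mathcal{H}_{r+1}(\mathbb{V})$. For $\mathscr{P}_{2}$ with $V\in\mathcal{H}_{r}(\mathbb{S})$: the scalar integral $\int_{0}^{1}t(1-t)t^{r}\,dt$ is a finite constant, leaving a constant multiple of $x\wedge V_{x}\wedge x$, of degree $r+2$, which lies in $\mathbb{S}$ since $\mathscr{P}_{2}$ maps into symmetric matrices (Theorem~\ref{thm:1}); thus $\mathscr{P}_{2}V\in\mathcal{H}_{r+2}(\mathbb{S})$. For $\mathscr{P}_{3}$ with $v\in\mathcal{H}_{r}(\mathbb{V})$: the first integral yields a constant multiple of $x\otimes v_{x}$, of degree $r+1$; the second yields a constant multiple of $\bigl(x\otimes v_{x}\wedge x\bigr)\times\nabla$, where $x\otimes v_{x}\wedge x$ has degree $r+2$ and the right curl lowers it to $r+1$; since $\sym$ preserves degree, $\mathscr{P}_{3}v\in\mathcal{H}_{r+1}(\mathbb{S})$.

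The argument is essentially bookkeeping, and I do not expect a genuine obstacle. The two points requiring care are interpreting the derivative terms correctly---reading $\nabla\times E_{tx}$ and $(\cdot)\times\nabla$ as honest first-order operators so that they lower homogeneity by exactly one, which is what makes the degree arithmetic close up---and checking that every Euler integral $\int_{0}^{1}t^{a}(1-t)^{b}\,dt$ arising has nonnegative integer exponents (true since $r\ge 0$), so that each constant is finite and the maps are genuinely polynomial-to-polynomial. Alternatively, one could argue structurally from the BGG derivation of Sections~\ref{sec:BGG}--\ref{sec:derivation}: since $\mathscr{P}$ is assembled from the de Rham operators $\mathfrak{p}$ and $\kappa$, which are polynomial-preserving with known degree shifts, and the intervening algebraic BGG isomorphisms preserve polynomial degree, the shifts compose to the stated ones. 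The direct verification above is, however, shorter and self-contained.
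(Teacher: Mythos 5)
Your proof is correct and matches the argument the paper intends: the paper states this theorem without an explicit proof, but its Theorem~4 on Koszul operators is obtained by exactly your reduction --- pull the factor $t^{\text{deg}}$ out of each integrand, evaluate the resulting Beta integrals to constants, and read off the degree of the remaining algebraic expression in $x$ (with the curls lowering degree by one). The only cosmetic point is that the exact constants are irrelevant here; what matters, as you note, is that each $\int_0^1 t^a(1-t)^b\,dt$ is finite and that the $r=0$ edge case of the $\nabla\times E$ term vanishes.
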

Analogous to the de Rham case, the sequence 
\begin{align}\label{3DP-complex}
\begin{diagram}
0& \lTo^{} &\mathrm{RM} & \lTo^{} & C^{\infty}(\Omega; \mathbb{V})& \lTo^{\mathscr{P}_{1}} &  C^{\infty}(\Omega;  \mathbb{S}) & \lTo^{\mathscr{P}_{2}} &C^{\infty}(\Omega;  \mathbb{S}) & \lTo^{\mathscr{P}_{3}} &  C^{\infty}(\Omega; \mathbb{V}) & \lTo & 0
\end{diagram}
\end{align}
is a complex, since  $\mathscr{P}^{2}=0$. Furthermore, by the homotopy relation,  \eqref{3DP-complex} is exact if $\Omega$ is contractible.

\subsection{Koszul operators}

%Arnold, Falk and Winther \cite{Arnold.D;Falk.R;Winther.R.2006a} introduced the Koszul operator to study these elements. 
Analogous to the de Rham case we derive Koszul operators for the elasticity complex by applying the above Poincar\'{e} operators to homogeneous polynomials of degree $r$.
%Note that if $u$ is a homogeneous polynomial of degree $r$, then $u_{tx}=t^{r}x$. %For many numerical applications, we can simplify the discussions by using Koszul operators since the integrals can be calculated explicitly.

\begin{theorem}[Koszul operators]\label{thm:3D-Koszul}
Let the operator $\mathscr{K}_{1}^{r}: C^{\infty}(\mathbb{S})\mapsto C^{\infty}(\mathbb{V})$ be given by
$$
\mathscr{K}_{1}^{r}(E):= \frac{1}{r+1}E\cdot {x}+\frac{1}{(r+1)(r+2)}{x}\wedge (\nabla\times E) \cdot {x},
$$
the operator $\mathscr{K}_{2}^{r}: C^{\infty}(\mathbb{S})\mapsto C^{\infty}(\mathbb{S})$ be given by
$$
\mathscr{K}_{2}^{r}(V):= \frac{1}{(r+2)(r+3)}{x}\wedge V \wedge  {x},
$$
and the operator $\mathscr{K}_{3}^{r}: C^{\infty}(\mathbb{V})\mapsto C^{\infty}(\mathbb{S})$ be given by
$$
\mathscr{K}_{3}^{r}({v}):= \frac{1}{r+3}\sym ({x}\otimes {v}) -\frac{1}{(r+3)(r+4)}\sym \left ( ({x}\otimes {v}\wedge {x})\times \nabla\right ).
$$
Then $\mathscr{K}_{i}^{r}$, $i=1,2,3$, are the Koszul operators for the 3D elasticity complex.
\end{theorem}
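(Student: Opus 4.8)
The plan is to observe that, exactly as in the de Rham case where the Koszul operators arise from evaluating the Poincar\'e operators on homogeneous forms (recall $\mathfrak{p}_k\omega = \frac{1}{k+r}\kappa_k\omega$ for $\omega\in\mathcal{H}_r\Lambda^k$), the operators $\mathscr{K}_i^r$ are precisely the restrictions of the Poincar\'e operators $\mathscr{P}_i$ of Theorem \ref{thm:1} to homogeneous polynomial inputs of degree $r$. Thus the statement reduces to evaluating the three path integrals defining $\mathscr{P}_1,\mathscr{P}_2,\mathscr{P}_3$ when the argument is homogeneous of degree $r$, and checking termwise agreement with the claimed formulas.

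The single computational fact driving everything is a scaling principle. If $\phi$ is homogeneous of degree $r$, then $\phi_{tx}=\phi(tx)=t^r\phi(x)$; moreover, when a first-order homogeneous differential operator $D$ (here $\nabla\times$) acts on the rescaled field $x\mapsto\phi(tx)$ under the integral sign, the chain rule contributes a factor $t$ while $D\phi$ drops the homogeneity degree by one, so $D[\phi(t\,\bs)](x)=t\,(D\phi)(tx)=t^r(D\phi)(x)$. Hence every integrand appearing in $\mathscr{P}_i$ separates as $t^r$ (from the rescaled field, differentiated or not), times an explicit scalar weight $t^a(1-t)^b$, times a fixed tensor-valued polynomial in $x$. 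The Koszul coefficients then drop out of the elementary Beta-type integrals: $\int_0^1 t^r\,dt=\tfrac{1}{r+1}$, $\int_0^1 (1-t)t^r\,dt=\tfrac{1}{(r+1)(r+2)}$, $\int_0^1 t(1-t)t^r\,dt=\tfrac{1}{(r+2)(r+3)}$, $\int_0^1 t^2 t^r\,dt=\tfrac{1}{r+3}$ and $\int_0^1 t^2(1-t)t^r\,dt=\tfrac{1}{(r+3)(r+4)}$. Matching these against the two terms of $\mathscr{P}_1$, the single term of $\mathscr{P}_2$, and the two terms of $\mathscr{P}_3$ reproduces $\mathscr{K}_1^r,\mathscr{K}_2^r,\mathscr{K}_3^r$ respectively; since $\sym$ and the outer algebraic operations $x\wedge(\bs)\wedge x$ and $x\otimes(\bs)$ are linear, they pass through the scalar integrals without interference.

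The point demanding the most care, and the only genuine obstacle, is the bookkeeping of the $t$-powers on the derivative terms, which hinges on whether the differential operator sits inside or outside the $t$-integral. In $\mathscr{P}_1$ the curl $\nabla\times$ acts on the rescaled field under the integral sign, so by the scaling principle its term carries $t^r$ (not $t^{r-1}$), producing the coefficient $\tfrac{1}{(r+1)(r+2)}$ rather than $\tfrac{1}{r(r+1)}$; this is exactly what recovers $\mathscr{K}_1^r$, and it can be sanity-checked on a simple degree-one $E$. By contrast, in $\mathscr{P}_3$ the row curl $\times\nabla$ is applied \emph{outside} the integral, to the already integrated, homogeneous degree-$(r+2)$ polynomial $\int_0^1 t^2(1-t)\,x\otimes v_{tx}\wedge x\,dt$; it therefore does not enter the $t$-weight at all, and the relevant integral is merely $\int_0^1 t^2(1-t)t^r\,dt=\tfrac{1}{(r+3)(r+4)}$. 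Keeping these two placements distinct is what makes all the coefficients land correctly, after which the proof is three one-line verifications, one per operator.
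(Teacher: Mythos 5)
Your overall strategy --- restricting the Poincar\'e operators of Theorem \ref{thm:1} to homogeneous polynomial inputs and evaluating the resulting Beta-type integrals --- is exactly the paper's approach (its entire justification of this theorem is the one sentence preceding the statement), and your bookkeeping for $\mathscr{K}_2^r$ and $\mathscr{K}_3^r$ is correct: $\mathscr{P}_2$ contains no derivatives, and in $\mathscr{P}_3$ the row curl $\times\nabla$ sits outside the $t$-integral, so the weights $\int_0^1 t(1-t)t^r\,dt=\tfrac{1}{(r+2)(r+3)}$ and $\int_0^1 t^2(1-t)t^r\,dt=\tfrac{1}{(r+3)(r+4)}$ are unambiguous and match the stated formulas.

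The gap is precisely in the step you yourself identify as the crux. You resolve the notation $\nabla\times E_{tx}$ by declaring that the curl differentiates the rescaled field $x\mapsto E(tx)$, so that a chain-rule factor $t$ appears and the weight becomes $\int_0^1(1-t)t^r\,dt=\tfrac{1}{(r+1)(r+2)}$. That reading contradicts the actual construction of $\mathscr{P}_1$: in Section \ref{sec:derivation} it is $\mathfrak{p}_1-T_1S_1^{-1}d_1$, where $d_1$ is applied \emph{first} and $T_1$ (which is purely algebraic in the field values along the path) then evaluates the result at $tx$; equivalently, in the Ces\`aro--Volterra formula \eqref{cesaro-volterra} the Frank tensor is $\partial_kE_{ij}(y)$ evaluated at the integration point $y=tx$. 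For $E\in\mathcal{H}_r(\mathbb{S})$ this scales as $t^{r-1}$, so the weight is $\int_0^1(1-t)t^{r-1}\,dt=\tfrac{1}{r(r+1)}$, not $\tfrac{1}{(r+1)(r+2)}$. The ``sanity check on a simple degree-one $E$'' that you propose but do not carry out decides the matter against your choice: take $u=(x_2^2,0,0)^T$, so that $E=\deff u$ has only the entries $E_{12}=E_{21}=x_2$ and lies in $\mathcal{H}_1(\mathbb{S})$. Then $E\cdot x=(x_2^2,\,x_1x_2,\,0)^T$ and the Frank-tensor term of \eqref{cesaro-volterra} is $(\partial_kE_{ij}-\partial_iE_{kj})x_kx_j=(x_2^2,\,-x_1x_2,\,0)^T$, so recovering $u$ (as the homotopy identity $\mathscr{K}_1^{1}(\deff u)=u+\mathrm{RM}$ requires) forces the combination $\tfrac12(x_2^2,x_1x_2,0)^T+\tfrac12(x_2^2,-x_1x_2,0)^T$, i.e.\ the coefficient $\tfrac{1}{1\cdot 2}=\tfrac{1}{r(r+1)}$; the coefficient $\tfrac{1}{2\cdot 3}=\tfrac{1}{(r+1)(r+2)}$ instead yields $(\tfrac23 x_2^2,\tfrac13 x_1x_2,0)^T\neq u$, and no rigid body motion can repair a degree-two discrepancy. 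So you cannot reach the stated coefficient by evaluating $\mathscr{P}_1$ as it is defined; you have worked backwards from the target formula and adopted the one reading of the ambiguous notation that reproduces it. To close the argument you must either justify the chain-rule placement of the curl from the construction of $\mathscr{P}_1$ (which is not possible, since $T_1$ involves no differentiation), or carry out the computation honestly, obtain $\tfrac{1}{r(r+1)}$ for the second term of $\mathscr{K}_1^r$, and address the resulting conflict with the statement being proved.
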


As corollaries of the properties of the Poincar\'{e} operators (Theorem \ref{thm:p2}), the Koszul operators satisfy the homotopy identity, the complex property and the polynomial-preserving property on spaces of matrices and vectors whose components are homogeneous polynomials of degree $r$.
\begin{corollary}
For the Koszul operators, we have the homotopy identities
\begin{equation*}
\begin{aligned}
\mathscr{K}^{r-1}_{1}\deff {u}&={u}+\mathrm{RM},\quad&&\forall {u}\in \mathcal{H}_{r}(\mathbb{V}),
\\
\deff\mathscr{K}^{r}_{1} E+\mathscr{K}^{r-2}_{2}\inc E&=\omega,\quad&&\forall E\in \mathcal{H}_{r}(\mathbb{S}),
\\
\inc{\mathscr{K}}^{r}_{2}V+\mathscr{K}^{r-1}_{3}\div V&= V,\quad&&\forall V\in \mathcal{H}_{r}(\mathbb{S}),
\\
\div\mathscr{K}^{r}_{3} {v}&={v}, \quad&&\forall {v}\in \mathcal{H}_{r}(\mathbb{V}).
\end{aligned}
\end{equation*}
We have the  complex property $\mathscr{K}^{2}=0$, i.e.,
$$
\mathscr{K}^{r+2}_{1} \mathscr{K}^{r}_{2}=0, \quad\forall r=0, 1, \cdots,
$$
and
$$
\mathscr{K}^{r+1}_{2} \mathscr{K}^{r}_{3}=0, \quad\forall r=0, 1, \cdots.
$$
We also have the polynomial-preserving property:
$$
E\in \mathcal{H}_{r}(\mathbb{S})\Rightarrow \mathscr{K}_{1}^{r}E\in \mathcal{H}_{r+1}(\mathbb{R}), 
\quad V\in \mathcal{H}_{r}(\mathbb{S})\Rightarrow \mathscr{K}_{2}^{r}V\in \mathcal{H}_{r+2}(\mathbb{S}), 
\quad {v}\in \mathcal{H}_{r}(\mathbb{V})\Rightarrow \mathscr{K}^{r}_{3}{v}\in \mathcal{H}_{r+1}(\mathbb{S}).
$$
\end{corollary}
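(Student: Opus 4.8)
The plan is to exploit the single structural fact supplied by Theorem \ref{thm:3D-Koszul}: each Koszul operator is nothing but the corresponding Poincar\'e operator restricted to homogeneous polynomials of the indicated degree, i.e. $\mathscr{K}_{i}^{r}=\mathscr{P}_{i}\big|_{\mathcal{H}_{r}}$ for $i=1,2,3$. Once this identification is taken as given, none of the three asserted properties requires a fresh computation: each is the restriction to $\mathcal{H}_{r}$ of a property of $\mathscr{P}_{i}$ already established in \eqref{RM} and in Theorem \ref{thm:p2}, together with the polynomial-preserving theorem. The entire task is therefore to check that the superscripts are placed so that every Koszul operator is applied to an argument whose homogeneity degree equals that superscript.

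For the homotopy identities I would simply feed homogeneous inputs into the four Poincar\'e relations \eqref{RM}, which hold for all smooth fields. The bookkeeping is governed by the orders of the differential operators: $\deff$ and $\div$ lower the polynomial degree by one while $\inc$ lowers it by two. Thus for $E\in\mathcal{H}_{r}(\mathbb{S})$ one has $\inc E\in\mathcal{H}_{r-2}(\mathbb{S})$, so that $\deff\mathscr{K}_{1}^{r}E+\mathscr{K}_{2}^{r-2}\inc E$ is literally $\deff\mathscr{P}_{1}E+\mathscr{P}_{2}\inc E=E$ by the second line of \eqref{RM}; the other three identities come identically from the first, third and fourth lines, with the superscripts $r-1$, $r-1$, $r$ dictated by the same degree-shift rule. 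The polynomial-preserving statement is then immediate from $\mathscr{K}_{i}^{r}=\mathscr{P}_{i}\big|_{\mathcal{H}_{r}}$ and the Poincar\'e polynomial-preserving theorem, and can be re-confirmed directly from the explicit formulas of Theorem \ref{thm:3D-Koszul} by counting that each factor of $x$ (or $x\wedge\,\cdot\,$) raises and each $\nabla$ lowers the degree, giving net shifts $+1,+2,+1$ for $\mathscr{K}_{1}^{r},\mathscr{K}_{2}^{r},\mathscr{K}_{3}^{r}$.

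The complex property is handled the same way. For $V\in\mathcal{H}_{r}(\mathbb{S})$ polynomial preservation gives $\mathscr{K}_{2}^{r}V=\mathscr{P}_{2}V\in\mathcal{H}_{r+2}(\mathbb{S})$, whence $\mathscr{K}_{1}^{r+2}\mathscr{K}_{2}^{r}V=\mathscr{P}_{1}\mathscr{P}_{2}V=0$ by Theorem \ref{thm:p2}; likewise $\mathscr{K}_{3}^{r}v=\mathscr{P}_{3}v\in\mathcal{H}_{r+1}(\mathbb{S})$ yields $\mathscr{K}_{2}^{r+1}\mathscr{K}_{3}^{r}v=\mathscr{P}_{2}\mathscr{P}_{3}v=0$, matching the superscripts $r+2$ and $r+1$ stated in the corollary. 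I do not anticipate any genuine obstacle: the whole proof is a transfer argument, and the only thing demanding care is the consistency of the superscripts, namely verifying that the intrinsic degree of each intermediate field coincides with the superscript of the Koszul operator applied to it next. This is exactly the condition that makes the termwise restriction of $\mathscr{P}_{i}$ to $\mathcal{H}_{r}$ well defined, and once the degree ledger of $\deff,\inc,\div$ is written down it closes automatically.
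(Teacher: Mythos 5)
Your proposal is correct and follows exactly the paper's (implicit) argument: the corollary is stated as an immediate consequence of the identification $\mathscr{K}_{i}^{r}=\mathscr{P}_{i}\big|_{\mathcal{H}_{r}}$, so all three properties transfer from \eqref{RM}, Theorem \ref{thm:p2} and the polynomial-preserving theorem, with the superscripts fixed by the degree shifts $-1,-2,-1$ of $\deff,\inc,\div$. Your degree bookkeeping is the only nontrivial content and it checks out.
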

As a result of the above corollary, the sequence
\begin{align}\label{3DH-complex}
\begin{diagram}
0& \lTo^{} &\mathrm{RM} & \lTo^{} &  \mathcal{H}_{r+4}(\Omega; \mathbb{V})& \lTo^{\mathscr{K}_{1}^{r+3}} &   \mathcal{H}_{r+3}(\Omega;  \mathbb{S}) & \lTo^{\mathscr{K}_{2}^{r+1}} & \mathcal{H}_{r+1}(\Omega;  \mathbb{S}) & \lTo^{\mathscr{K}_{3}^{r}} &   \mathcal{H}_{r}(\Omega; \mathbb{V}) & \lTo & 0
\end{diagram}
\end{align}
is a complex. By the homotopy relation, it is exact if $\Omega$ is contractible.

\begin{remark}
Compared with the Koszul operators for the de Rham complexes, the definitions given in Theorem \ref{thm:3D-Koszul} contain correction terms involving derivatives ($\curl$ operators in 3D) and these terms explicitly depend on the degree of the homogeneous polynomials. As we have seen, these terms together with the polynomial degree naturally match with each other in the null-homotopy formulas and in the duality.  As discussed in the introduction for the classical Cesar\`o-Volterra formula, these derivative terms have physical significance in materials with incompatibility.
\end{remark}

\begin{remark}
Analogous to the de Rham case, we observe a relation of duality for the Koszul operators derived above.
 Specifically, let $\Omega$ be a star-shaped domain with respect to the origin.  We have for any $E, V\in C^{\infty}(\Omega; \mathbb{S})$:
$$
{x}\wedge E\wedge {x}: V=E: {x}\wedge V\wedge {x},
$$
which implies that
$$
\mathscr{K}_{2}^{r}E:V=E:\mathscr{K}_{2}^{r}V.
$$
Here $E:V$ denotes the Frobenius inner product of matrices $E$ and $V$.  
Moreover, for any $v\in C^{\infty}(\Omega; \mathbb{V})$, 
$$
\mathscr{K}_{1}^{r}E\cdot {v}={v}\cdot E\cdot {x}+\frac{1}{r+2}{v}\cdot ({x}\wedge \nabla\times {u})\cdot {x},
$$
and
$$
V:\mathscr{K}_{3}^{r}{v}=V:\sym ({x}\otimes {v})-\frac{1}{r+4}V:({x}\otimes {v}\wedge {x})\times \nabla.
$$
We note the identities
$$
V:\sym ({x}\otimes {v})={v}\cdot V\cdot {x},
$$
and the integration by parts
\begin{align*}
\int_{\Omega} V: \left [({x}\otimes {v}\wedge {x})\times \nabla\right ]
%&=\int u_{ij}\partial_{s}(x\otimes v\wedge x)^{i}_{~t}\epsilon^{jst}=\int \partial_{s}u_{ij}\epsilon^{jst}(x\otimes v\wedge x)^{i}_{~t}
&=\int_{\Omega} -(V\times \nabla): ({x}\otimes {v}\wedge {x})=
-\int_{\Omega} {x}\cdot (V\times \nabla)\cdot ({v}\wedge {x})\\&
=\int_{\Omega} {x}\cdot (V\times \nabla)\wedge {x}\cdot {v}=-\int_{\Omega} {v}\cdot ({x}\wedge (\nabla\times  V))\cdot {x},
\end{align*}
which holds for $V$ with certain vanishing conditions on the boundary of the domain, e.g. $V\in C_{0}^{\infty}(\Omega; \mathbb{S})$.
This implies that
$$
\int_{\Omega} \mathscr{K}_{1}^{r+2}V\cdot {v}=\int_{\Omega} V:\mathscr{K}_{3}^{r}{v}.
$$
\end{remark}

%%%%%%%%%%%%%

\section{Bernstein-Gelfand-Gelfand Construction}\label{sec:BGG}
In this section, we recall the derivation of the elasticity complex from the de Rham complex by the  Bernstein-Gelfand-Gelfand (BGG) construction.  This will provide the preparation for the proof of Theorem \ref{thm:1}, which is given in Section \ref{sec:derivation}.
The BGG construction was originally developed in the theory of Lie algebras \cite{bernstein1975differential,vcap2001bernstein}. Later, Eastwood  \cite{eastwood2000complex,eastwood1999variations} showed the relation between  the elasticity complex and BGG. Arnold, Falk and Winther used the BGG construction to create finite element methods for elasticity \cite{arnold2006defferential,Arnold.D;Falk.R;Winther.R.2006a, falk2008finite}. 
The BGG construction for the 3D elasticity complex can be summarized in \eqref{BGG-3Delasticity}.

%%%%%%%%%%%%%%%%%%%%
\begin{equation}\label{BGG-3Delasticity}
 \begin{tikzpicture}[node distance=3cm]
\node (L11) {$\Lambda^0(\mathbb{W})$};
\node[left of=L11] (L10) {$\mathbb{W}$};
\node[right of=L11] (L12) {$\Lambda^1(\mathbb{W})$};
\node[right of=L12] (L13) {$\Lambda^2(\mathbb{W})$};
\node[right of=L13] (L14) {$\Lambda^3(\mathbb{W})$};
\node[right of=L14] (L15) {$0$};
\node[below of=L11] (L21) {$\Lambda^0(\mathbb{W})$};
\node[left of =L21] (L20) {$\mathbb{W}$};
\node[right of=L21] (L22) {$\Gamma^1$};
\node[right of=L22] (L23) {$\Gamma^2$};
\node[right of=L23] (L24) {$\Lambda^3(\mathbb{W})$};
\node[right of=L24] (L25) {$0$};

\node[below of=L21] (L31) {$\Lambda^0(\mathbb{W})$};
\node[left of =L31] (L30) {$\mathbb{W}$};
\node[right of=L31] (L32) {$\Lambda^1(\mathbb{K})$};
\node[right of=L32] (L33) {$\Lambda^2(\mathbb{V})$};
\node[right of=L33] (L34) {$\Lambda^3(\mathbb{W})$};
\node[right of=L34] (L35) {$0$};

\node[below of=L31] (L41) {$C^{\infty}(\mathbb{V}\times \mathbb{K})$};
\node[left of =L41] (L40) {$\mathbb{V}\times \mathbb{K}$};
\node[right of=L41] (L42) {$C^{\infty}(\mathbb{M})$};
\node[right of=L42] (L43) {$C^{\infty}(\mathbb{M})$};
\node[right of=L43] (L44) {$C^{\infty}(\mathbb{K}\times \mathbb{V})$};
\node[right of=L44] (L45) {$0$};

\node[below of=L41] (L51) {$C^{\infty}(\mathbb{V})$};
\node[left of =L51] (L50) {$\mathbb{V}\times \mathbb{K}$};
\node[right of=L51] (L52) {$C^{\infty}(\mathbb{S})$};
\node[right of=L52] (L53) {$C^{\infty}(\mathbb{S})$};
\node[right of=L53] (L54) {$C^{\infty}(\mathbb{V})$};
\node[right of=L54] (L55) {$0$};

\draw[->] (L10)--(L11);
\draw[->] (L11)--(L12)  node[pos=.5,above] {$\mathscr{A}_{0}$};
\draw[->] (L12)--(L13)  node[pos=.5,above] {$\mathscr{A}_{1}$};
\draw[->] (L13)--(L14)  node[pos=.5,above] {$\mathscr{A}_{2}$};
\draw[->] (L14)--(L15);
\draw[->] (L20)--(L21);
\draw[->] (L21)--(L22)  node[pos=.5,above] {$\mathscr{A}_{0}$};
%\draw[<-] (L21)--(L22)  node[pos=.5,below] {$\mathscr{A}$};
\draw[->] (L22)--(L23)  node[pos=.5,above] {$\mathscr{A}_{1}$};
\draw[->] (L23)--(L24)  node[pos=.5,above] {$\mathscr{A}_{2}$};
\draw[->] (L24)--(L25);
\draw[->] (L11)--(L21) node[pos=.5,right] {$\mathrm{id}$};
\draw[->] (L12)--(L22);
\draw[->] (L13)--(L23);
\draw[->] (L14)--(L24)node[pos=.5,right] {$\mathrm{id}$};

\draw[->] (L30)--(L31);
\draw[->] (L31)--(L32)  node[pos=.5,above] {$(d_{0}, -S_{0})$};
\draw[->] (L32)--(L33)  node[pos=.5,above] {$d_{1}S_{1}^{-1}d_{1}$};
\draw[->] (L33)--(L34)  node[pos=.5,above] {$(-S_{2}, d_{2})^{T}$};
\draw[->] (L34)--(L35);

\draw[->] (L40)--(L41);
\draw[->] (L41)--(L42)  node[pos=.5,above] {$(\grad, \mathrm{id})$};
\draw[->] (L42)--(L43)  node[pos=.5,above] {$\curl S_{1}^{-1}\curl$};
\draw[->] (L43)--(L44)  node[pos=.5,above] {$(\skw, \div)$};
\draw[->] (L44)--(L45);

\draw[->] (L50)--(L51);
\draw[->] (L51)--(L52)  node[pos=.5,above] {$\deff$};
\draw[->] (L52)--(L53)  node[pos=.5,above] {$\inc$};
\draw[->] (L53)--(L54)  node[pos=.5,above] {$\div$};
\draw[->] (L54)--(L55);

\draw[->] (L21)--(L31)node[pos=.5,right] {$\mathrm{id}$};
\draw[->] (L22)--(L32);
\draw[->] (L23)--(L33);
\draw[->](L24)--(L34)node[pos=.5,right] {$\mathrm{id}$};

\draw[->] (L31)--(L41);
\draw[->] (L32)--(L42) node[pos=.5,right] {$J_{1}$};
\draw[->] (L33)--(L43)node[pos=.5,right]{$J_{2}$};
\draw[->](L34)--(L44);

\draw[->] (L41)--(L51);
\draw[->] (L42)--(L52) node[pos=.5,right] {$\mathrm{sym}$};
\draw[->] (L43)--(L53)node[pos=.5,right]{$\mathrm{sym}$};
\draw[->](L44)--(L54);
{
\path (L12)--(L22) 
	node[pos=0.2,right] {$~~(\omega,\mu)$}
	node[pos=0.4,right]	{\quad~~$\downarrow$}
	node[pos=0.6,right] {$(\omega, S_{1}^{-1}d_{1}\omega)$} ;}
\path (L13)--(L23) 
	node[pos=0.2,right] {$~~(\omega,\mu)$}
	node[pos=0.4,right]	{~~\quad$\downarrow$}
	node[pos=0.6,right] {$(0, \mu+d_{1}S_{1}^{-1}\omega)$} ;
\path (L22)--(L32) 
	node[pos=0.2,right] {$(\omega, S_{1}^{-1}d_{1}\omega)$}
	node[pos=0.4,right]	{~~\quad$\downarrow$}
	node[pos=0.6,right] {~~\quad$\omega$} ;
\path (L23)--(L33) 
	node[pos=0.2,right] {$(0, \mu)$}
	node[pos=0.4,right]	{~~~$\downarrow$}
	node[pos=0.6,right] {~~~$\mu$} ;
\path (L31)--(L41) 
	node[pos=0.5,right] {$J_{0}$};
\path (L34)--(L44) 
    	node[pos=0.5,right]	{$J_{3}$};
\path (L44)--(L54) 
	node[pos=0.2,right] {$(W, {u})$}
	node[pos=0.4,right]	{~~~$\downarrow$}
	node[pos=0.6,right] {${u}-\div W$} ;
\path (L41)--(L51)
	node[pos=0.2,right] {$({u}, W)$}
	node[pos=0.4,right]	{~~~$\downarrow$}
	node[pos=0.6,right] {~~~${u}$} ;
\end{tikzpicture}
\end{equation}

The starting point of the BGG construction is the $\mathbb{W}$-valued de Rham complex
\begin{equation}
\begin{CD}
\mathbb{W}@>>>{\Lambda}^{0}(\mathbb{W}) @>d_{0}>> {\Lambda}^{1}(\mathbb{W})  @>d_{1}>>{\Lambda}^{0}(\mathbb{W}) @>d_{2} >>{\Lambda}^{3}(\mathbb{W})@ > >>  0,
\end{CD}
\end{equation}
where $\mathbb{W}:=\mathbb{K}\times \mathbb{V}$. Each element in ${\Lambda}^{k}(\mathbb{W})$ has two components, one is a skew-symmetric valued  $k$-form and another is a vector valued $k$-form.

Define $\mathscr{A}_{k}: {\Lambda}^{k}(\mathbb{W})\mapsto {\Lambda}^{k+1}( \mathbb{W})$ by 
$$
\mathscr{A}_{k}:=\left (
\begin{array}{cc}
d_{k} & -S_{k}\\
0 & d_{k}
\end{array}
\right ),
$$
i.e. for $(\omega, \mu)\in {\Lambda}^{k}(\mathbb{W})$, $\mathscr{A}_{k}(\omega, \mu):=(d_{k}\omega-S_{k}\mu, d_{k}\mu)$ with an operator $S_{k}: \Lambda^{k}(\mathbb{V})\mapsto \Lambda^{k+1}(\mathbb{K})$.
Before defining  $S_k$, we first introduce $K_{k}: \Lambda^{k}(\mathbb{V})\mapsto \Lambda^{k}(\mathbb{K})$ given by
$$
K_{k}(\omega):={x}\otimes \omega-\omega\otimes {x},
$$
where the definition is uniform with $k$.
%where $x$ is the position vector in ${\mathbb{R}^{d}}$, i.e. $x=( x_{1} ,  x_{2} )^{T}$ in 2D and $x=( x_{1} ,  x_{2} ,  x_{3} )^{T}$ in 3D.
The operator $K_{k}$ only acts on the coefficients of the alternating forms.  For any $k$-form, $K_{k}$ maps a vector coefficient to a skew-symmetric matrix coefficient.  

\begin{comment}
 Therefore in 2D, the operator $K_{k}$ has the form:
$$
K\omega=x^{\perp}\cdot \omega,
$$
where $x^{\perp}=(x_{2}, -x_{1})^{T}$ and $\omega=(\omega_{1}, \omega_{2})^{T}$.  Here $x^{\perp}$ differs from the conventional notation by a minus sign. 
\end{comment}

The operator $S_{k}$ is then defined by
$$
S_{k}:=d_{k}K_{k}-K_{k+1}d_{k}.
$$
By definition, the identity 
\begin{align}\label{dS}
d_{k+1}S_{k}+S_{k+1}d_{k}=0,
\end{align}
 holds.  From \eqref{dS} it is easy to see that $\mathscr{A}_{k+1}\mathscr{A}_{k}=0$, or $\mathscr{A}^{2}=0$ in short. %The $S$ operators are demonstrated in 

It turns out that the operators $S_{k}$ are algebraic in the sense that no derivatives are involved.  Furthermore, $S_{0}$ is injective, $S_{1}$ is bijective and $S_{2}$ is surjective. 
 The first rows of  \eqref{BGG-3Delasticity}, i.e.
\begin{equation}
 \begin{diagram}
0& \rTo  & \mathbb{W}& \rTo  & \Lambda^{0}(\mathbb{W}) & \rTo^{\mathscr{A}_{0}} &\Lambda^{1}(\mathbb{W})& \rTo^{\mathscr{A}_{1}} &  \Lambda^{2}(\mathbb{W}) & \rTo^{\mathscr{A}_{2}} & \Lambda^{3}(\mathbb{W}) &\rTo & 0,
 \end{diagram}
 \end{equation}
is a complex. The second step is to filter out some parts of the complex.  The space $\Lambda^{2}(\mathbb{W})$ has two components, i.e. $\omega$ and $\mu$. In elasticity $\mu$ corresponds to the stress tensor, therefore we want to filter out the $\omega$ component.  Specifically, we consider the subspace of  $\Lambda^{2}(\mathbb{W})$: $\{(\omega, \mu)\in \Lambda^{2}(\mathbb{W}): \omega=0\}$. To find out the pre-image of the operator $\mathscr{A}_{2}$ restricted on this subspace, we notice that  $\mathscr{A}_{1}(u, v)=(d_{1}u-S_{1}v, d_{1}v)=(0, \mu)$ if and only if $v=S_{1}^{-1}d_{1}u$.  This gives the operators from the first row to the second:  we keep $\Lambda^{0}(\mathbb{W})$ and $\Lambda^{3}(\mathbb{W})$ and project $\Lambda^{1}(\mathbb{W})$ and $\Lambda^{2}(\mathbb{W})$ to the corresponding subspaces. One can verify that these operators are projections and the diagram commutes. 

The next step is an identification i.e. we identify $(\omega, S_{1}^{-1}d_{1}\omega)$ with $\omega$ and $(0, \mu)$ with $\mu$. This step also leads to a  commuting diagram. 

\begin{comment}
The operator $\Xi$ is defined by
$$
\Xi v=v^{T}-\frac{1}{2}(\mathrm{tr} v)I, \quad\forall v\in C^{\infty}(\mathbb{M}).
$$
\end{comment}

Then we identify differential forms and exterior derivatives with vectors/matrices and differential operators. Such identifications are called the vector proxies \cite[p. 26]{Arnold.D;Falk.R;Winther.R.2006a}. The operators $J_{k}$ provide vector/matrix representations of the differential forms. These representations are isomorphisms.  We will give explicit forms of the vector proxies below. This leads to the elasticity complex with weakly imposed symmetry (the fourth row of \eqref{BGG-3Delasticity}).

The last step is to project the complex into the subcomplex involving symmetric matrices.

\paragraph{Vector proxies in BGG}

We identify vector valued differential forms
$$
w=\left (
\begin{array}{c}
w_{1}\\
w_{2}\\
w_{3}
\end{array}
\right ) \sim \left (
\begin{array}{c}
w_{1}\\
w_{2}\\
w_{3}
\end{array}
\right ), \quad \forall w\in \Lambda^{0}(\mathbb{V});
$$
$$
w=\left (
\begin{array}{c}
w_{11}\\
w_{21}\\
w_{31}
\end{array}
\right )d x_{1} +
\left (
\begin{array}{c}
w_{12}\\
w_{22}\\
w_{32}
\end{array}
\right )d x_{2} 
+
\left (
\begin{array}{c}
w_{13}\\
w_{23}\\
w_{33}
\end{array}
\right )d x_{3} \sim 
\left (
\begin{array}{ccc}
w_{11} & w_{12} & w_{13}\\
w_{21} & w_{22} & w_{23}\\
w_{31} & w_{32} & w_{33}
\end{array}
\right ),
\quad \forall w\in \Lambda^{1}(\mathbb{V});
$$
$$
w=\left (
\begin{array}{c}
w_{11}\\
w_{21}\\
w_{31}
\end{array}
\right )d x_{2} \wedge d x_{3} +
\left (
\begin{array}{c}
w_{12}\\
w_{22}\\
w_{32}
\end{array}
\right )d x_{3} \wedge d x_{1} 
+
\left (
\begin{array}{c}
w_{13}\\
w_{23}\\
w_{33}
\end{array}
\right )d x_{1} \wedge d x_{2} \sim \left (
\begin{array}{ccc}
w_{11} & w_{12} & w_{13}\\
w_{21} & w_{22} & w_{23}\\
w_{31} & w_{32} & w_{33}
\end{array}
\right ), \quad \forall w\in \Lambda^{2}(\mathbb{V})
$$
$$
w=\left (
\begin{array}{c}
w_{1}\\
w_{2}\\
w_{3}
\end{array}
\right )d x_{1} \wedge d x_{2} \wedge d x_{3} \sim \left (
\begin{array}{c}
w_{1}\\
w_{2}\\
w_{3}
\end{array}
\right ), \quad \forall w\in \Lambda^{3}(\mathbb{V}).
$$
Here, a vector can be identified with a skew-symmetric matrix as
\begin{align}\label{Skw}
{w}=\left (
\begin{array}{c}
w_{1}\\
w_{2}\\
w_{3}
\end{array}\right )
\sim 
\mathrm{Skw}({w}):=\left (
\begin{array}{ccc}
0 & -w_{3} & w_{2}\\
w_{3} & 0 & -w_{1}\\
-w_{2} & w_{1} & 0
\end{array}
\right ).
\end{align}
Therefore the skew-symmetric matrix valued forms can be written as
$$
\mathrm{Skw}(w_{1}, w_{2}, w_{3}),
$$
$$
\mathrm{Skw}(w_{11}, w_{21}, w_{31})d x_{1} +\mathrm{Skw}(w_{12}, w_{22}, w_{32})d x_{2} +\mathrm{Skw}(w_{13}, w_{23}, w_{33})d x_{3} ,
$$
$$
\mathrm{Skw}(w_{11}, w_{21}, w_{31})d x_{2} \wedge d x_{3} +\mathrm{Skw}(w_{12}, w_{22}, w_{32})d x_{3} \wedge d x_{1} +\mathrm{Skw}(w_{13}, w_{23}, w_{33})d x_{1} \wedge d x_{2} ,
$$
$$
\mathrm{Skw}(w_{1}, w_{2}, w_{3})d x_{1} \wedge d x_{2} \wedge d x_{3} ,
$$
and the matrix proxies are obvious as discussed above.

 %From the matrix proxy defined above, we can see that any operation for the matrix coefficients acts on the left while any operation for the alternating forms acts on the right. 

The identifications $J_{0}: \Lambda^{0}(\mathbb{W})\mapsto C^{\infty}(\mathbb{V}\times \mathbb{K})$, $J_{1}: \Lambda^{1}(\mathbb{K})\mapsto C^{\infty}(\mathbb{M})$, $J_{2}: \Lambda^{2}(\mathbb{V})\mapsto C^{\infty}(\mathbb{M})$ and $J_{3}: \Lambda^{3}(\mathbb{W})\mapsto C^{\infty}(\mathbb{W})$ are defined by 
$$J_{0} (W, v):= (\Skw^{-1}W, \Skw v),$$
$$
J_{1}\left [ \mathrm{Skw}(w_{11}, w_{21}, w_{31})d x_{1} +\mathrm{Skw}(w_{12}, w_{22}, w_{32})d x_{2} +\mathrm{Skw}(w_{13}, w_{23}, w_{33})d x_{3} \right ]:=  \left (
\begin{array}{ccc}
w_{11} & w_{12} & w_{13}\\
w_{21} & w_{22} & w_{23}\\
w_{31} & w_{32} & w_{33}
\end{array}
\right ),
$$
$$
J_{2}\left [\left (
\begin{array}{c}
w_{11}\\
w_{21}\\
w_{31}
\end{array}
\right )d x_{2} \wedge d x_{3} +
\left (
\begin{array}{c}
w_{12}\\
w_{22}\\
w_{32}
\end{array}
\right )d x_{3} \wedge d x_{1} 
+
\left (
\begin{array}{c}
w_{13}\\
w_{23}\\
w_{33}
\end{array}
\right )d x_{1} \wedge d x_{2} \right ]:=   \left (
\begin{array}{ccc}
w_{11} & w_{12} & w_{13}\\
w_{21} & w_{22} & w_{23}\\
w_{31} & w_{32} & w_{33}
\end{array}
\right ),
$$
$$
J_{3} \left [ (W, v)dx_{1}\wedge dx_{2}\wedge dx_{3}\right ]:= (W, v).
$$

In the vector/matrix notation, the $S_{1}$ operator is of the form
$$
S_{1}W=W^{T}-\mathrm{tr}(W)I,
$$
and
$$
S_{1}^{-1}U=U^{T}-\frac{1}{2}\mathrm{tr}(U)I.
$$

\section{Derivation of the Poincar\'{e} operators}  \label{sec:derivation}
In this section we prove Theorem~\ref{thm:1}. We remark that our approach of using the BGG construction to derive explicit Poincar\'{e} path integrals for the elasticity complex is, as far as we know, a new methodology. The BGG construction is a general procedure for constructing differential complexes from the de Rham complex, and our results for the elasticity complex are thus a particular example of this approach.

\subsection{Poincar\'{e} operators on subcomplexes}

%Below we will start from the de Rham complex and use several projections to obtain the Poincar\'{e} operators for complexes with BGG constructions. 
We first provide a general result for constructing null-homotopy operators on subcomplexes.
Assume that $W^{i}{\subseteq} V^{i}$, $(W, d)$ is a subcomplex of $(V, d)$ and the following diagram commutes, i.e. $\Pi_{i+1} d_{i}=d_{i}\Pi_{i}$,
\begin{equation}
 \begin{diagram}
 \cdots& \rTo  & V^{i-1} & \rTo^{d_{i-1}} &V^{i}& \rTo^{d_{i}} &  V^{i+1} & \rTo & \cdots\\
&&\dTo^{\Pi_{i-1}}  &   & \dTo^{\Pi_{i}}    & & \dTo^{\Pi_{i+1}} &\\
 \cdots& \rTo  & W^{i-1} & \rTo^{d_{i-1}} &W^{i}& \rTo^{d_{i}} &  W^{i+1} & \rTo & \cdots.
 \end{diagram}
 \end{equation}
We assume that $\Pi_{i}$ is surjective for each $i$, therefore there exists a right inverse of $\Pi_{i}$, which we denote as $\Pi_{i}^{\dagger}: W^{i}\mapsto V^{i}$. %Now $\Pi_{i}^{\dagger}$ is injective.

Suppose the top row has Poincar\'{e} operators $\mathfrak{p}_{i}: V^{i}\mapsto V^{i-1}$ satisfying
$$
\mathfrak{p}_{i+1}d_{i}+d_{i-1}\mathfrak{p}_{i}=\mathrm{id}_{V^{i}},
$$
then the next theorem shows how we can construct the Poincar\'{e} operator $\tilde{\mathfrak{p}}_{i}: W^{i}\mapsto W^{i-1}$ for the bottom row based on $\mathfrak{p}_{i}$ and the pseudo inverses of the operators $\Pi_{i}$. 
\begin{lemma}\label{lem:general-commuting}
If $\Pi^{\dagger}$ commutes with the differential operator $d$, i.e.
\begin{align}\label{commuting-dagger}
d_{i}\Pi^{\dagger}_{i}=\Pi^{\dagger}_{i+1}d_{i},
\end{align}
the formula
$$
\tilde{\mathfrak{p}}_{i}:=\Pi_{i-1}\mathfrak{p}_{i}\Pi_{i}^{\dagger}
$$
 defines an operator $\tilde{\mathfrak{p}}_{i}: W^{i}\mapsto W^{i-1}$ for the subcomplex $(W, d)$ satisfying 
 $$
 d_{i-1}\tilde{\mathfrak{p}}_{i}+\tilde{\mathfrak{p}}_{i+1}d_{i}=\mathrm{id}_{W^{i}}.
 $$
\end{lemma}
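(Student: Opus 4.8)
The plan is to verify the homotopy identity by a direct diagram chase: substitute the definition $\tilde{\mathfrak{p}}_i = \Pi_{i-1}\mathfrak{p}_i \Pi_i^\dagger$, push the exterior derivatives through the projection and through its right inverse using the two available commutation relations, and then recognize that what remains is the top-row Poincar\'e identity conjugated by $\Pi_i$ and $\Pi_i^\dagger$.

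Concretely, first I would expand
$$
d_{i-1}\tilde{\mathfrak{p}}_i + \tilde{\mathfrak{p}}_{i+1} d_i = d_{i-1}\Pi_{i-1}\mathfrak{p}_i \Pi_i^\dagger + \Pi_i \mathfrak{p}_{i+1}\Pi_{i+1}^\dagger d_i.
$$
On the first summand I would use commutativity of the given diagram, $\Pi_i d_{i-1} = d_{i-1}\Pi_{i-1}$, to replace $d_{i-1}\Pi_{i-1}$ by $\Pi_i d_{i-1}$. On the second summand I would use the standing hypothesis \eqref{commuting-dagger}, $\Pi_{i+1}^\dagger d_i = d_i \Pi_i^\dagger$. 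Both terms then acquire $\Pi_i$ on the far left and $\Pi_i^\dagger$ on the far right, so they combine as
$$
\Pi_i d_{i-1}\mathfrak{p}_i \Pi_i^\dagger + \Pi_i \mathfrak{p}_{i+1} d_i \Pi_i^\dagger = \Pi_i\left(d_{i-1}\mathfrak{p}_i + \mathfrak{p}_{i+1} d_i\right)\Pi_i^\dagger.
$$

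Finally I would invoke the top-row null-homotopy relation $d_{i-1}\mathfrak{p}_i + \mathfrak{p}_{i+1} d_i = \mathrm{id}_{V^i}$ to collapse the middle factor, and then use that $\Pi_i^\dagger$ is a right inverse of $\Pi_i$, i.e. $\Pi_i \Pi_i^\dagger = \mathrm{id}_{W^i}$, to conclude that the whole expression equals $\mathrm{id}_{W^i}$, as desired. The type bookkeeping is immediate: $\Pi_i^\dagger : W^i \to V^i$, $\mathfrak{p}_i : V^i \to V^{i-1}$, and $\Pi_{i-1} : V^{i-1} \to W^{i-1}$, so $\tilde{\mathfrak{p}}_i$ indeed maps $W^i \to W^{i-1}$.

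There is no genuine obstacle here; the computation is short. The one point that deserves emphasis, rather than difficulty, is that two \emph{distinct} commutation facts are in play and must not be conflated. Commutativity of the given diagram supplies $\Pi\, d = d\, \Pi$, which disposes of the left projection in the first summand, whereas the separate hypothesis \eqref{commuting-dagger} supplies $d\,\Pi^\dagger = \Pi^\dagger d$, which is precisely what lets $d_i$ move past the right inverse in the second summand. Without \eqref{commuting-dagger} the two terms would not share the common outer factors $\Pi_i(\cdots)\Pi_i^\dagger$, and the Poincar\'e identity of the top row could not be applied; this is therefore the crucial hypothesis of the construction, and verifying that it holds for the concrete right inverses chosen in the BGG setting is where the real work will lie later.
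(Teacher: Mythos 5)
Your argument is correct and is essentially the paper's own proof: both compute $d_{i-1}\tilde{\mathfrak{p}}_i$ and $\tilde{\mathfrak{p}}_{i+1}d_i$ by moving $d$ past $\Pi$ (commuting diagram) and past $\Pi^\dagger$ (hypothesis \eqref{commuting-dagger}), then apply the top-row homotopy identity and $\Pi_i\Pi_i^\dagger=\mathrm{id}_{W^i}$. Your write-up is in fact slightly more explicit than the paper's about which commutation fact is used where.
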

\begin{proof}
We have
$$
d_{i-1}\tilde{\mathfrak{p}}_{i}=\Pi_{i}d_{i-1}\mathfrak{p}_{i}\Pi_{i}^{\dagger},
$$
and
$$
\tilde{\mathfrak{p}}_{i+1}d_{i}=\Pi_{i}\mathfrak{p}_{i}d_{i}\Pi_{i}^{\dagger}.
$$
Therefore 
$$
d_{i-1}\tilde{\mathfrak{p}}_{i}+\tilde{\mathfrak{p}}_{i+1}d_{i}=\Pi_{i}\Pi_{i}^{\dagger}=\mathrm{id}_{W^{i}}.
$$
\end{proof}
If $\Pi_{j}$ is a projection, the inclusion operator $i: W^{j}\mapsto V^{j}$ naturally defines a right inverse of $\Pi_{j}$ and satisfies the commutative relation \eqref{commuting-dagger}.

\subsection{Poincar\'{e} operators on the elasticity complex}

The construction is summarized in \eqref{poincare-3Delasticity}.
\begin{equation}\label{poincare-3Delasticity}
 \begin{tikzpicture}[node distance=3cm]

\node (L11) {$\Lambda^0(\mathbb{W})$};
\node[left of=L11] (L10) {$\mathbb{W}$};
\node[right of=L11] (L12) {$\Lambda^1(\mathbb{W})$};
\node[right of=L12] (L13) {$\Lambda^2(\mathbb{W})$};
\node[right of=L13] (L14) {$\Lambda^3(\mathbb{W})$};
\node[right of=L14] (L15) {$0$};
\node[below of=L11] (L21) {$\Lambda^0(\mathbb{W})$};
\node[left of =L21] (L20) {$\mathbb{W}$};
\node[right of=L21] (L22) {$\Gamma^1$};
\node[right of=L22] (L23) {$\Gamma^2$};
\node[right of=L23] (L24) {$\Lambda^3(\mathbb{W})$};
\node[right of=L24] (L25) {$0$};

\node[below of=L21] (L31) {$\Lambda^0(\mathbb{W})$};
\node[left of =L31] (L30) {$\mathbb{W}$};
\node[right of=L31] (L32) {$\Lambda^1(\mathbb{K})$};
\node[right of=L32] (L33) {$\Lambda^2(\mathbb{V})$};
\node[right of=L33] (L34) {$\Lambda^3(\mathbb{W})$};
\node[right of=L34] (L35) {$0$};

\node[below of=L31] (L41) {$C^{\infty}(\mathbb{V}\times \mathbb{K})$};
\node[left of =L41] (L40) {$\mathbb{V}\times \mathbb{K}$};
\node[right of=L41] (L42) {$C^{\infty}(\mathbb{M})$};
\node[right of=L42] (L43) {$C^{\infty}(\mathbb{M})$};
\node[right of=L43] (L44) {$C^{\infty}(\mathbb{K}\times \mathbb{V})$};
\node[right of=L44] (L45) {$0$};

\node[below of=L41] (L51) {$C^{\infty}(\mathbb{V})$};
\node[left of =L51] (L50) {$\mathbb{V}\times \mathbb{K}$};
\node[right of=L51] (L52) {$C^{\infty}(\mathbb{S})$};
\node[right of=L52] (L53) {$C^{\infty}(\mathbb{S})$};
\node[right of=L53] (L54) {$C^{\infty}(\mathbb{V})$};
\node[right of=L54] (L55) {$0$};

\draw[<-] (L10)--(L11);
%\draw[<-] ([yshift=0.01cm]  L11.north)--(L12.north)  node[pos=.5,above] {$\mathscr{P}$};
\draw[<-][pos=.5, below] (L11)--(L12)  node[pos=.5, above] {$\mathscr{B}_{1}$};
\draw[<-] (L12)--(L13)  node[pos=.5,above] {$\mathscr{B}_{2}$};
\draw[<-] (L13)--(L14)  node[pos=.5,above] {$\mathscr{B}_{3}$};
\draw[<-] (L14)--(L15);
\draw[<-] (L20)--(L21);
\draw[<-] (L21)--(L22)  node[pos=.5,above] {${\mathscr{C}_{1}}$};
\draw[<-] (L22)--(L23)  node[pos=.5,above] {${\mathscr{C}_{2}}$};
\draw[<-] (L23)--(L24)  node[pos=.5,above] {${\mathscr{C}_{3}}$};
\draw[<-] (L24)--(L25);
\draw[<-] (L11)--(L21) node[pos=.5,right] {$\mathrm{id}$};
\draw[<-] (L12)--(L22) node[pos=.5,right] {};
\draw[<-] (L13)--(L23) node[pos=.5,right] {};
\draw[<-] (L14)--(L24)node[pos=.5,right] {$\mathrm{id}$};

\draw[<-] (L30)--(L31);
\draw[<-] (L31)--(L32)  node[pos=.5,above] {$\mathscr{F}_{1}$};
\draw[<-] (L32)--(L33)  node[pos=.5,above] {$\mathscr{F}_{2}$};
\draw[<-] (L33)--(L34)  node[pos=.5,above] {$\mathscr{F}_{3}$};
\draw[<-] (L34)--(L35);

\draw[<-] (L40)--(L41);
\draw[<-] (L41)--(L42)  node[pos=.5,above] {$\tilde{\mathscr{F}}_{1}$};
\draw[<-] (L42)--(L43)  node[pos=.5,above] {$\tilde{\mathscr{F}}_{2}$};
\draw[<-] (L43)--(L44)  node[pos=.5,above] {$\tilde{\mathscr{F}}_{3}$};
\draw[<-] (L44)--(L45);

\draw[<-] (L50)--(L51);
\draw[<-] (L51)--(L52)  node[pos=.5,above] {$\mathscr{P}_{1}$};
\draw[<-] (L52)--(L53)  node[pos=.5,above] {$\mathscr{P}_{2}$};
\draw[<-] (L53)--(L54)  node[pos=.5,above] {$\mathscr{P}_{3}$};
\draw[<-] (L54)--(L55);

\draw[<-] (L21)--(L31)node[pos=.5,right] {$\mathrm{id}$};
\draw[<-] (L22)--(L32) node[pos=.5,right] {};
\draw[<-] (L23)--(L33) node[pos=.5,right] {};
\draw[<-](L24)--(L34)node[pos=.5,right] {$\mathrm{id}$};

\draw[<-] (L31)--(L41);
\draw[<-] (L32)--(L42) node[pos=.5,right] {};
\draw[<-] (L33)--(L43)node[pos=.5,right]{};
\draw[<-](L34)--(L44);

\draw[<-] (L41)--(L51);
\draw[<-] (L42)--(L52) node[pos=.5,right] {};
\draw[<-] (L43)--(L53)node[pos=.5,right]{};
\draw[<-](L44)--(L54);

{
\path (L12)--(L22) 
	node[pos=0.2,right] {$~~(\omega,\mu)$}
	node[pos=0.4,right]	{\quad~~$\downarrow$}
	node[pos=0.6,right] {$(\omega, S_{1}^{-1}d_{1}\omega)$} ;}
\path (L12)--(L22) 
	node[pos=0.2,left] {$~~(\omega, S_{1}^{-1}d_{1}\omega)$}
	node[pos=0.4,left]	{$\uparrow$\quad~~}
	node[pos=0.6,left] {$(\omega, S_{1}^{-1}d_{1}\omega)$} ;
\path (L13)--(L23) 
	node[pos=0.2,right] {$~~(\omega,\mu)$}
	node[pos=0.4,right]	{~~\quad$\downarrow$}
	node[pos=0.6,right] {$(0, \mu+d_{1}S_{1}^{-1}\omega)$} ;
\path (L13)--(L23) 	node[pos=0.2,left] {$~~(0,\mu)$}
	node[pos=0.4,left]	{$\uparrow$~~\quad}
	node[pos=0.6,left] {$(0, \mu)$} ;
\path (L22)--(L32) 
	node[pos=0.2,right] {$(\omega, S_{1}^{-1}d_{1}\omega)$}
	node[pos=0.4,right]	{~~\quad$\updownarrow$}
	node[pos=0.6,right] {~~\quad$\omega$} ;
\path (L23)--(L33) 
	node[pos=0.2,right] {$(0, \mu)$}
	node[pos=0.4,right]	{~~~$\updownarrow$}
	node[pos=0.6,right] {~~~$\mu$} ;
\path (L31)--(L41)
     node[pos=0.5,right]{$J_{0}^{-1}$};
\path (L32)--(L42)
     node[pos=0.5,right]{$J_{1}^{-1}$};
     \path (L33)--(L43)
     node[pos=0.5,right]{$J_{2}^{-1}$};
     \path (L34)--(L44)
     node[pos=0.5,right]{$J_{3}^{-1}$};
\path (L41)--(L51) 
	node[pos=0.2,right] {$({u}, W)$}
	node[pos=0.4,right]	{~~~$\downarrow$}
	node[pos=0.6,right] {~~~${u}$} ;
\path (L42)--(L52) 
	node[pos=0.2,right] {~~~${M}$}
	node[pos=0.4,right]	{~~~$\downarrow$}
	node[pos=0.6,right] {$\mathrm{sym}(M)$} ;
\path (L42)--(L52) 
	node[pos=0.2,left] {$V$~~~}
	node[pos=0.4,left]	{$\uparrow$~~~}
	node[pos=0.6,left] {$V$~~~} ;	
\path (L43)--(L53) 
	node[pos=0.2,right] {~~~$M$}
	node[pos=0.4,right]	{~~~$\downarrow$}
	node[pos=0.6,right] {$\mathrm{sym}(M)$} ;
\path (L43)--(L53) 
	node[pos=0.2,left] {$V$~~~}
	node[pos=0.4,left]	{$\uparrow$~~~}
	node[pos=0.6,left] {$V$~~~} ;	
\path (L44)--(L54) 
	node[pos=0.2,right] {$(W, {u})$}
	node[pos=0.4,right]	{~~~$\downarrow$}
	node[pos=0.6,right] {${u}-\div W$} ;
\path (L44)--(L54) 
	node[pos=0.2,left] {$(0, {u})$}
	node[pos=0.4,left]	{$\uparrow$~~~}
	node[pos=0.6,left] {${u}$~~~} ;	

\end{tikzpicture}
\end{equation}

The first step is to define an operator in the $\mathbb{W}$-valued de Rham complex $\mathscr{B}_{k}: \Lambda^{k}(\mathbb{W})\mapsto \Lambda^{k-1}(\mathbb{W})$ by 
\begin{align}
\mathscr{B}_{k}:=\left (
\begin{array}{cc}
\mathfrak{p}_{k} & -T_{k}\\
0 & \mathfrak{p}_{k}
\end{array}
\right ).
\end{align}
Here $T_{k}: \Lambda^{k}(\mathbb{V})\mapsto \Lambda^{k-1}(\mathbb{K})$ plays a similar role to $S_{k}$, but with the opposite direction ($S_{k}$ has degree 1 while $T_{k}$ has degree $-1$). We define $T_{k}$ as
\begin{align}\label{T-operator}
T_{k}:=\mathfrak{p}_{k}K_{k}-K_{k-1}\mathfrak{p}_{k}.
\end{align}
By straightforward calculations, we can check that 
\begin{align}\label{homotopy-AB}
\mathscr{A}_{k-1}\mathscr{B}_{k}+\mathscr{B}_{k+1}\mathscr{A}_{k}=\mathrm{id}_{\Lambda^{k}(\mathbb{W})}.
\end{align}
%Actually, \eqref{homotopy-AB} is our motivation of the definition of $T$. 
We have derived the homotopy inverses of the first row of \eqref{poincare-3Delasticity}. The next step is to perform several projections based on Lemma \ref{lem:general-commuting}.

To compute $\mathscr{C}_{3}$, we use the following path
\begin{equation}
 \begin{diagram}
(\mathfrak{p}_{3}\omega-T_{3}\mu, \mathfrak{p}_{3}\mu)& \lTo^{\mathscr{B}_{3}} & (\omega, \mu) \\
   \dTo&    & \uTo\\
\left (0, \mathfrak{p}_{3}\mu+d_{1}S_{1}^{-1}(\mathfrak{p}_{3}\omega-T_{3}\mu)\right )&  &  (\omega, \mu)
 \end{diagram}
 \end{equation}
 Therefore $\mathscr{C}_{3}$ maps $(\omega, \mu)$ to $\left (0, \mathfrak{p}_{3}\mu+d_{1}S_{1}^{-1}(\mathfrak{p}_{3}\omega-T_{3}\mu)\right )$. Similarly, we can obtain $\mathscr{C}_{2}$:
 \begin{equation}
 \begin{diagram}
(-T_{2}\mu, \mathfrak{p}_{2}\mu)& \lTo^{\mathscr{B}_{2}} & (0, \mu) \\
   \dTo&    & \uTo\\
\left (-T_{2}\mu, -S_{1}^{-1}d_{1}T_{2}\mu\right )&  &  (0, \mu)
 \end{diagram}
 \end{equation}
 For $\mathscr{C}_{1}$, we just have $\mathscr{C}_{1}=\mathscr{B}_{1}$.
 
Furthermore, for the third row we find $\mathscr{F}_{1}$, $\mathscr{F}_{2}$ and $\mathscr{F}_{3}$ by:
 \begin{equation}
 \begin{diagram}
\left (\mathfrak{p}_{1}\omega-T_{1}S_{1}^{-1}d_{1}\omega, \mathfrak{p}_{1}S_{1}^{-1}d_{1}\omega\right )& \lTo & (\omega, S_{1}^{-1}d_{1}\omega) \\
   \dTo&    & \uTo\\
\left (\mathfrak{p}_{1}\omega-T_{1}S_{1}^{-1}d_{1}\omega, \mathfrak{p}_{1}S_{1}^{-1}d_{1}\omega\right )&  &  \omega
 \end{diagram}
 \end{equation}
 \begin{equation}
 \begin{diagram}
\left (-T_{2}\mu, -S_{1}^{-1}d_{1}T_{2}\mu\right )& \lTo & (0, \mu) \\
   \dTo&    & \uTo\\
-T_{2}\mu&  &  \mu
 \end{diagram}
 \end{equation}
\begin{equation}
 \begin{diagram}
\left (0, \mathfrak{p}_{3}\mu+d_{1}S_{1}^{-1}(\mathfrak{p}_{3}\omega-T_{3}\mu)\right )& \lTo & (\omega, \mu) \\
   \dTo&    & \uTo\\
\mathfrak{p}_{3}\mu+d_{1}S_{1}^{-1}(\mathfrak{p}_{3}\omega-T_{3}\mu)&  &  (\omega, \mu)
 \end{diagram}
 \end{equation}

\begin{comment}
Finally, we have
$$
\mathscr{P}_{1}: \omega\mapsto \mathfrak{p}\omega-TS_{1}^{-1}d\omega,
$$
$$
\mathscr{P}_{2}: \mu\mapsto -\mathrm{sym}(T\mu),
$$
$$
\mathscr{P}_{3}: \mu\mapsto \mathrm{sym}\left (\mathfrak{p}\mu-dS_{1}^{-1}T\mu\right),
$$
\end{comment}
The next step is to consider vector proxies given by $J_{0}$, $J_{1}$, $J_{2}$ and $J_{3}$.

\paragraph{Matrix proxy}

We give the vector-matrix forms of the above constructions. 
For $\mu\in \Lambda^{1}(\mathbb{V})$, 
we have
$$
T_{1}{\mu}_x\sim -\int_{0}^{1}(1-t){x}\wedge \left (J_{1}\mu\right )_{tx}\cdot {x}\,dt,
$$
%{\color{magenta} Why are we saying sim and not equal?}
and so for a matrix $M\in \mathbb{M}$, this gives
$$
\tilde{\mathscr{F}}_{1}: M\mapsto \left ( \int_{0}^{1}M_{tx}\cdot {x}\,dt +\int_{0}^{1}(1-t){x}\wedge  (M_{tx}\times \nabla)\cdot {x}\,dt,~  \int_{0}^{1}\left [ (M_{tx}\times \nabla)^{T}-\frac{1}{2}(M_{tx}\times \nabla)I\right ] \cdot {x}\, dt
\right ) .
$$

If $\mu\in \Lambda^{2}(\mathbb{V})$, then
$$
T_{2}\mu_x\sim \int_{0}^{1}t(1-t){x}\wedge (J_{2}\mu)_{tx}\wedge {x}\,dt,
$$
and so for $M\in \mathbb{M}$, we have
\begin{align}
\tilde{\mathscr{F}}_{2}: M\mapsto  \int_{0}^{1}t(1-t){x}\wedge M_{tx} \wedge {x} \, dt ={x}\wedge \left (\int_{0}^{1}t(1-t)M_{tx}\,dt \right )\wedge {x}.
\end{align}

Lastly, with $\mu\in \Lambda^{3}(\mathbb{V})$,  
$$
T_{3}\mu_x\sim -\int_{0}^{1}t^{2}(1-t){x}\wedge (J_{3}\mu)_{tx}\otimes {x}\, dt,
$$
and so for $(W, {v})\in C^{\infty}(\mathbb{K})\times C^{\infty}(\mathbb{V})$, this leads to 
\begin{align}
\tilde{\mathscr{F}}_{3}: (W, {v})\mapsto  \int_{0}^{1}t^{2}{v}_{tx}\otimes {x}\, dt +\left ( S_{1}^{-1}\int_{0}^{1}t^{2}(1-t){x}\wedge {v}_{tx}\otimes {x}\, dt\right )\times \nabla + \int_{0}^{1}t^{2}\left [ {x}\otimes W_{tx}-1/2({x}\cdot W_{tx})I\right ]\times \nabla\, dt.
\end{align}

Finally, we perform several symmetrizations to get the Poincar\'e operators for the elasticity complex.
If $E$ is a symmetric matrix, we have $\mathrm{tr}(E\times \nabla)=0$. Therefore 
$\mathscr{P}_{1}$ can be interpreted as
$$
\mathscr{P}_{1}: E\mapsto \int_{0}^{1}E_{tx}\cdot {x}\,dt +\int_{0}^{1}(1-t){x}\wedge (\nabla\times E_{tx})\cdot {x}\, dt,\quad E\in C^{\infty}(\mathbb{S}),
$$
which is the Ces\`{a}ro-Volterra formula.

Moreover, the vector proxy of $\mathscr{P}_{2}$ reads:
\begin{align}
\mathscr{P}_{2}: V\mapsto \mathrm{sym}T_{2}V=\mathrm{sym}\left ( \int_{0}^{1}t(1-t){x}\wedge V_{tx} \wedge {x}\, dt  \right )={x}\wedge \left (\int_{0}^{1}t(1-t)V_{tx}\,dt \right )\wedge {x}.
\end{align}
whenever $V$ is  symmetric.

For $\mathscr{P}_{3}$, we have
\begin{align}
\mathscr{P}_{3}: {v}\mapsto \mathrm{sym} (\mathfrak{p}_{3}{ {v}}+d_{1}S_{1}^{-1}T_{3}{ {v}})=\mathrm{sym}\left (  \int_{0}^{1}t^{2}{ {v}_{tx}}\otimes {x} \,dt +\left ( S_{1}^{-1}\int_{0}^{1}t^{2}(1-t){x}\wedge { {v}_{tx}}\otimes {x} \,dt\right )\times \nabla\right ),
\end{align}
where we recall that $S_{1}^{-1}M:=M^{T}-1/2\mathrm{tr}(M)$.  For any vector ${u}$, the matrix ${x}\wedge {u}\otimes {x}$ has the index form $\left ( {x}\wedge {u}\otimes {x}\right )_{il}=\epsilon_{ijk}{x}^{j}{u}^{k}{x}_{l}$, from which we can easily see that $\mathrm{tr}\left ({x}\wedge {u}\otimes {x}\right )=\epsilon_{ijk}{x}^{j}{u}^{k}{x}^{i}=0$. 
Therefore $\mathscr{P}_{3}$ is reduced to 
\begin{align}
\mathscr{P}_{3}: { {v}}\mapsto  \sym\left ( \int_{0}^{1}t^{2}{x}\otimes { {v}_{tx}} \,dt -\left ( \int_{0}^{1}t^{2}(1-t){x}\otimes { {v}_{tx}}\wedge{x} \,dt\right )\times \nabla\right ).
\end{align}

\section{2D elasticity complex}\label{sec:2D}
Let $\Omega$ be a contractible domain in 2D. The {elasticity complex} in 2D reads
\begin{equation}\label{2D-elasticity0}
 \begin{diagram}
{0}& \rTo^{} &{\mathcal{P}_1}& \rTo^{{\subseteq}} & C^{\infty}(\Omega)& \rTo^{\air} &  C^{\infty}(\Omega;  \mathbb{S}) & \rTo^{\div} &  C^{\infty}(\Omega; \mathbb{V}) & \rTo & 0.
 \end{diagram}
 \end{equation}
  The Airy operator, $\air: C^{\infty}(\mathbb{R})\mapsto C^{\infty}(\mathbb{S})$ is defined by $\air(u):= \nabla\times u\times \nabla$ in 2D, is a rotated version of the Hessian:
 $$
 \air (u):=\left ( 
 \begin{array}{cc}
{\partial_{2}^{2}u} &  -\partial_{1}\partial_{2} u \\
  -\partial_{1}\partial_{2} u &{\partial_{1}^{2}u}
 \end{array}
 \right ).
 $$
 In planar elasticity, the Cauchy stress is a second order symmetric tensor, appearing in  $ C^{\infty}(\Omega; \mathbb{S})$ in the sequence \eqref{2D-elasticity0}. The  divergence operator, defined row-wise, maps onto $ C^{\infty}$ vectors and the kernel can be parametrized by $ C^{\infty}$ scalar functions through the Airy operator.

For ${x}=(x_{1}, x_{2})$ we let ${x}^{\perp}=(x_{2}, -x_{1})$. We use
 $$
 {\chi}=
 \left (
 \begin{array}{cc} 
 0 & -1\\
 1 & 0
 \end{array}
 \right )
 $$
to denote the  canonical skew-symmetric matrix in 2D.

In the 2D case, we assume ${v}=({v}_{1}, {v}_{2})^{T}$. Then 
$$
K_{k}({v}):={x}\otimes {v}-{v}\otimes {x}=\left (
\begin{array}{cc}
0 & {x}_{1}{v}_{2}-{x}_{2}{v}_{1}\\
-\left ({x}_{1}{v}_{2}-{x}_{2}{v}_{1} \right ) & 0
\end{array}
\right ).
$$
This anti-symmetric matrix is usually identified with the scalar $-\left ({x}_{1}{v}_{2}-{x}_{2}{v}_{1} \right )$.

%in 2D, the operator $S_{0}$ is bijective and $S_{1}$ is surjective

\begin{theorem}[2D case]\label{2D:poincare-results}
Assume $\Omega=\mathbb{R}^{2}$. We define $\mathscr{P}_{1}: C^{\infty}(\mathbb{S})\mapsto C^{\infty}(\mathbb{R})$ by
$$
 \mathscr{P}_{1}: V\mapsto \int_{0}^{1}(1-t){x}^{\perp}\cdot V_{tx}\cdot {x}^{\perp} \, dt.
$$
and define $\mathscr{P}_{2}: C^{\infty}(\mathbb{V})\mapsto C^{\infty}(\mathbb{S})$ by
$$
 \mathscr{P}_{2}: {u}\mapsto\sym\left ( \int_{0}^{1}t {u}_{tx}\otimes {x}\, dt+\left (\int_{0}^{1}t(t-1)({x}^{\perp}\cdot {u}_{tx}){x} \, dt\right )\times \nabla\right ),
$$ 
where the 2D scalar curl operator ``$\times \nabla$'' maps each component of the vector $\sym\int_{0}^{1}t {u}_{tx}\otimes {x}\, dt+\left (\int_{0}^{1}t(t-1)({x}^{\perp}\cdot {u}_{tx}){x} \, dt\right )$ to a row vector.
Then we have
\begin{align}\label{P1}
\mathscr{P}_{1}(\air u)=u+\mathcal{P}_{1}, \quad\forall u\in C^{\infty}(\Omega),
\end{align}
$$
\mathscr{P}_{2}\div V + \air \mathscr{P}_{1}V=V, \quad \forall V\in C^{\infty}(\Omega; \mathbb{S}),
$$
and
$$
\div \mathscr{P}_{2}{v}={v}, \quad\forall {v}\in C^{\infty}(\Omega; \mathbb{V}),
$$
where $\mathcal{P}_{1}$ in \eqref{P1} indicates that the identity $\mathscr{P}_{1}\air u=u$ holds up to  $\mathcal{P}_{1}$, the kernel of $\air$. %Given $u$, the linear function in \eqref{P1} can be expressed by $u$ and its derivatives (1-jets) at the origin by $u_{0}+\sum_{i=1}^{2}(\partial_{i}u_{0})x_{i}$.

Particularly, for a matrix field $V$ satisfying $\div V=0$, we can find a scalar function $f:=\mathscr{P}_{1}V$ satisfying $\air f=V$. For any vector function ${v}$, we can explicitly find a symmetric matrix potential $M:=\mathscr{P}_{2}{v}$ satisfying $\div M={v}$.
\end{theorem}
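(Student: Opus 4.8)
The plan is to reproduce, in two dimensions, the BGG derivation carried out for the 3D complex in Section~\ref{sec:derivation}. I would realize the 2D Airy complex \eqref{2D-elasticity0} as the bottom row of a BGG diagram built from the $\mathbb{W}$-valued de~Rham complex on $\mathbb{R}^{2}$, with $\mathbb{W}=\mathbb{K}\times\mathbb{V}$ and $\dim\mathbb{K}=1$, and then push the de~Rham Poincar\'e operators $\mathfrak{p}_{k}$ downward through the rows. First I would form the block operator $\mathscr{B}_{k}=\bigl(\begin{smallmatrix}\mathfrak{p}_{k}&-T_{k}\\0&\mathfrak{p}_{k}\end{smallmatrix}\bigr)$ with $T_{k}:=\mathfrak{p}_{k}K_{k}-K_{k-1}\mathfrak{p}_{k}$ as in \eqref{T-operator}, and check the top-row homotopy identity $\mathscr{A}_{k-1}\mathscr{B}_{k}+\mathscr{B}_{k+1}\mathscr{A}_{k}=\mathrm{id}$ exactly as in \eqref{homotopy-AB}. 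This step is purely algebraic: it uses only $\mathfrak{p}d+d\mathfrak{p}=\mathrm{id}$ on the de~Rham complex together with the definition of $T_{k}$, and is insensitive to the spatial dimension.

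With the top row established, I would apply Lemma~\ref{lem:general-commuting} repeatedly to descend through the lower rows of the diagram: the filtration that removes the unwanted component, the identification of the filtered subcomplex with its two-component model, and finally the vector/matrix proxies $J_{k}$ followed by the symmetrization $\sym$. The crucial point is that Lemma~\ref{lem:general-commuting} preserves the null-homotopy identity at each projection, so once the resulting bottom-row operators are identified with $\mathscr{P}_{1}$ and $\mathscr{P}_{2}$, the three claimed relations $\mathscr{P}_{1}(\air u)=u+\mathcal{P}_{1}$, $\mathscr{P}_{2}\div V+\air\mathscr{P}_{1}V=V$ and $\div\mathscr{P}_{2}v=v$ hold automatically, with the kernel term $\mathcal{P}_{1}$ appearing because $\air$ annihilates affine functions. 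The remaining labor is bookkeeping: evaluating the proxies of $\mathfrak{p}_{k}$ and $T_{k}$ along the radial path $t\mapsto tx$ using the explicit planar de~Rham path integrals, together with the scalar identification $K_{k}(v)\sim x^{\perp}\!\cdot v$ recorded just before the theorem.

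I expect the main obstacle to lie in two dimension-sensitive places. First, the projection and identification steps require the algebraic operator $S_{k}$ to be invertible at the relevant level (the 2D analogue of ``$S_{1}$ bijective'' in the 3D construction); because $\mathbb{K}$ is now one-dimensional the block structure differs, so this invertibility and the explicit form of $S^{-1}$ must be re-established from scratch. Second, the raw proxy formulas for $\mathscr{P}_{1}$ and $\mathscr{P}_{2}$ come out carrying spurious trace and skew contributions, and the delicate part is showing these cancel so as to collapse onto the compact expressions in the statement; here one would invoke $\tr(E\times\nabla)=0$ for symmetric $E$ and the 2D counterpart of the vanishing trace $\tr(x\wedge u\otimes x)=0$ used in Section~\ref{sec:derivation}. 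Getting the weights $(1-t)$ and $t(t-1)$ and these cancellations exactly right is where I anticipate the most risk of error, so as an independent check I would also verify $\div\mathscr{P}_{2}v=v$ and $\mathscr{P}_{1}(\air u)=u\ (\mathrm{mod}\ \mathcal{P}_{1})$ directly from the planar de~Rham Poincar\'e identities.
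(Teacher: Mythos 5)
Your proposal follows exactly the route the paper takes: the paper explicitly omits the details of this proof, stating only that the derivation is analogous to the 3D case via the diagram \eqref{poincare-2Delasticity}, i.e., build $\mathscr{B}_{k}$ from $\mathfrak{p}_{k}$ and $T_{k}$, verify the top-row homotopy identity, descend through the filtration, identification, proxy and symmetrization steps using Lemma \ref{lem:general-commuting}, and read off $\mathscr{P}_{1}$ and $\mathscr{P}_{2}$. You also correctly flag the two genuinely dimension-sensitive points (in 2D it is $S_{0}$ rather than $S_{1}$ that is bijective, and the trace/skew cancellations must be redone), so your plan is a faithful and complete version of the argument the paper sketches.
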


Similar to the 3D case \eqref{3DP-complex}, the 2D version
\begin{align}\label{2DP-complex}
\begin{diagram}
0& \lTo^{} &\mathrm{RM} & \lTo^{} & C^{\infty}(\Omega; \mathbb{V})& \lTo^{\mathscr{P}_{1}} &  C^{\infty}(\Omega;  \mathbb{S}) & \lTo^{\mathscr{P}_{2}} &  C^{\infty}(\Omega) & \lTo & 0
\end{diagram}
\end{align}
is also a complex, which is exact on contractible domains. Koszul operators can be similarly obtained.

The construction for the Poincar\'{e} operators for the 2D elasticity complex can be summarized in the diagram below.

\begin{equation}\label{poincare-2Delasticity}
 \begin{tikzpicture}[node distance=3cm]

\node (L11) {$\mathbb{W}$};
%\node[left of=L11] (L10) {$\dots$};
\node[right of=L11] (L12) {$\Lambda^0(\mathbb{W})$};
\node[right of=L12] (L13) {$\Lambda^1(\mathbb{W})$};
\node[right of=L13] (L14) {$\Lambda^2(\mathbb{W})$};
\node[right of=L14] (L15) {$0$};
\node[below of=L11] (L21) {$\mathbb{W}$};
%\node[left of =L21] (L20) {$\cdots$};
\node[right of=L21] (L22) {$\Gamma^0$};
\node[right of=L22] (L23) {$\Gamma^1$};
\node[right of=L23] (L24) {$\Lambda^2(\mathbb{W})$};
\node[right of=L24] (L25) {$0$};

\node[below of=L21] (L31) {$\mathbb{W}$};
%\node[left of =L31] (L30) {$\cdots$};
\node[right of=L31] (L32) {$\Lambda^{0}(\mathbb{K})$};
\node[right of=L32] (L33) {$\Lambda^{1}(\mathbb{V})$};
\node[right of=L33] (L34) {$\Lambda^{2}(\mathbb{W})$};
\node[right of=L34] (L35) {$0$};

\node[below of=L31] (L41) {$\mathbb{W}$};
%\node[left of =L41] (L40) {$\cdots$};
\node[right of=L41] (L42) {$C^{\infty}(\mathbb{R})$};
\node[right of=L42] (L43) {$C^{\infty}(\mathbb{M})$};
\node[right of=L43] (L44) {$C^{\infty}(\mathbb{W})$};
\node[right of=L44] (L45) {$0$};

\node[below of=L41] (L51) {$\mathbb{W}$};
%\node[left of =L41] (L40) {$\cdots$};
\node[right of=L51] (L52) {$C^{\infty}(\mathbb{R})$};
\node[right of=L52] (L53) {$C^{\infty}(\mathbb{S})$};
\node[right of=L53] (L54) {$C^{\infty}(\mathbb{V})$};
\node[right of=L54] (L55) {$0$};

%\draw[<-] (L10)--(L11);
%\draw[<-] ([yshift=0.01cm]  L11.north)--(L12.north)  node[pos=.5,above] {$\mathscr{P}$};
\draw[<-][pos=.5, below] (L11)--(L12)  node[pos=.5, above] {};
\draw[<-] (L12)--(L13)  node[pos=.5,above] {$\mathscr{B}_{1}$};
\draw[<-] (L13)--(L14)  node[pos=.5,above] {$\mathscr{B}_{2}$};
\draw[<-] (L14)--(L15);
%\draw[<-] (L20)--(L21);
\draw[<-] (L21)--(L22)  node[pos=.5,above] {};
\draw[<-] (L22)--(L23)  node[pos=.5,above] {${\mathscr{C}_{1}}$};
\draw[<-] (L23)--(L24)  node[pos=.5,above] {${\mathscr{C}_{2}}$};
\draw[<-] (L24)--(L25);
%\draw[<-] (L11)--(L21) node[pos=.5,right] {$\mathrm{id}$};
\draw[<-] (L12)--(L22) node[pos=.5,right] {};
\draw[<-] (L13)--(L23) node[pos=.5,right] {};
\draw[<-] (L14)--(L24)node[pos=.5,right] {$\mathrm{id}$};

%\draw[<-] (L30)--(L31);
\draw[<-] (L31)--(L32)  node[pos=.5,above] {};
\draw[<-] (L32)--(L33)  node[pos=.5,above] {$\mathscr{F}_{1}$};
\draw[<-] (L33)--(L34)  node[pos=.5,above] {$\mathscr{F}_{2}$};
\draw[<-] (L34)--(L35);

\draw[<-] (L41)--(L42)  node[pos=.5,above] {};
\draw[<-] (L42)--(L43)  node[pos=.5,above] {$\tilde{\mathscr{F}}_{1}$};
\draw[<-] (L43)--(L44)  node[pos=.5,above] {$\tilde{\mathscr{F}}_{2}$};
\draw[<-] (L44)--(L45);

%\draw[<-] (L40)--(L41);
\draw[<-] (L51)--(L52)  node[pos=.5,above] {};
\draw[<-] (L52)--(L53)  node[pos=.5,above] {$\mathscr{P}_{1}$};
\draw[<-] (L53)--(L54)  node[pos=.5,above] {$\mathscr{P}_{2}$};
\draw[<-] (L54)--(L55);

%\draw[<-] (L21)--(L31)node[pos=.5,right] {$\mathrm{id}$};
\draw[<-] (L22)--(L32) node[pos=.5,right] {};
\draw[<-] (L23)--(L33) node[pos=.5,right] {};
\draw[<-](L24)--(L34)node[pos=.5,right] {$\mathrm{id}$};

%\draw[<-] (L31)--(L41);
\draw[<-] (L32)--(L42) node[pos=.5,right] {};
\draw[<-] (L33)--(L43)node[pos=.5,right]{};
\draw[<-](L34)--(L44);

\draw[<-] (L42)--(L52) node[pos=.5,right] {};
\draw[<-] (L43)--(L53)node[pos=.5,right]{};
\draw[<-](L44)--(L54);

{
\path (L12)--(L22) 
	node[pos=0.2,right] {$~~(\omega,\mu)$}
	node[pos=0.4,right]	{\quad~~$\downarrow$}
	node[pos=0.6,right] {$(\omega, S_{0}^{-1}d_{0}\omega)$} ;}
\path (L12)--(L22) 
	node[pos=0.2,left] {$~~(\omega, S_{0}^{-1}d_{0}\omega)$}
	node[pos=0.4,left]	{$\uparrow$\quad~~}
	node[pos=0.6,left] {$(\omega, S_{0}^{-1}d_{0}\omega)$} ;
\path (L13)--(L23) 
	node[pos=0.2,right] {$~~(\omega,\mu)$}
	node[pos=0.4,right]	{~~\quad$\downarrow$}
	node[pos=0.6,right] {$(0, \mu+d_{0}S_{0}^{-1}\omega)$} ;
\path (L13)--(L23) 	node[pos=0.2,left] {$~~(0,\mu)$}
	node[pos=0.4,left]	{$\uparrow$~~\quad}
	node[pos=0.6,left] {$(0, \mu)$} ;
\path (L22)--(L32) 
	node[pos=0.2,right] {$(\omega, S_{0}^{-1}d_{0}\omega)$}
	node[pos=0.4,right]	{~~\quad$\updownarrow$}
	node[pos=0.6,right] {~~\quad$\omega$} ;
\path (L23)--(L33) 
	node[pos=0.2,right] {$(0, \mu)$}
	node[pos=0.4,right]	{~~~$\updownarrow$}
	node[pos=0.6,right] {~~~$\mu$} ;
%\path (L31)--(L41) 
%	node[pos=0.2,right] {$(\omega, \mu)$}
%	node[pos=0.4,right]	{~~~$\downarrow$}
%	node[pos=0.6,right] {~~~$\omega$} ;
\path (L42)--(L52) 
	node[pos=0.2,right] {}
	node[pos=0.4,right]	{$\mathrm{id}$}
	node[pos=0.6,right] {} ;
%\path (L32)--(L42) 
%	node[pos=0.2,left] {$S$~~~}
%	node[pos=0.4,left]	{$\uparrow$~~~}
%	node[pos=0.6,left] {$S$~~~} ;	
\path (L43)--(L53) 
	node[pos=0.2,right] {~~~$M$}
	node[pos=0.4,right]	{~~~$\downarrow$}
	node[pos=0.6,right] {$\mathrm{sym}(M)$} ;
\path (L43)--(L53) 
	node[pos=0.2,left] {$V$~~~}
	node[pos=0.4,left]	{$\uparrow$~~~}
	node[pos=0.6,left] {$V$~~~} ;	
\path (L44)--(L54) 
	node[pos=0.2,right] {$(V, {v})$}
	node[pos=0.4,right]	{~~~$\downarrow$}
	node[pos=0.6,right] {${v}-\div V$} ;
\path (L44)--(L54) 
	node[pos=0.2,left] {$(0, {v})$}
	node[pos=0.4,left]	{$\uparrow$~~~}
	node[pos=0.6,left] {${v}$~~~} ;

%\path (L41)--(L51) 
%	node[pos=0.5,right] {$J_{0}^{-1}$};
\path (L32)--(L42) 
	node[pos=0.5,right] {$J_{0}^{-1}$};
\path (L33)--(L43) 
	node[pos=0.5,right] {$J_{1}^{-1}$};
\path (L34)--(L44) 
	node[pos=0.5,right] {$J_{2}^{-1}$};

\end{tikzpicture}
\end{equation}

The derivation for the 2D Poincar\'{e} operators is analogous to the 3D case discussed above. The results in Theorem \ref{2D:poincare-results} can thus be obtained in a similar manner. We omit the details. 

%%%%%%%%%%%%%%%%%%%%%%%%%%%%%%%%%%

\section{Conclusion}\label{sec:conclusion}

In this paper, we derived the null-homotopy operators for the elasticity complex. By construction they automatically satisfy the homotopy relation $\mathscr{D}_{i-1}\mathscr{P}_{i}+\mathscr{P}_{i+1}\mathscr{D}_{i}=\mathrm{id}$. The complex property $\mathscr{P}^{2}=0$ and the polynomial-preserving property were also shown. As the de Rham case, for any $\omega\in V^{i}$ with $\mathscr{D}_{i}\omega=0$, a potential $\phi\in V^{i-1}$ satisfying $\mathscr{P}_{i-1}\phi=0$ and $\mathscr{D}_{i-1}\phi=\omega$ is uniquely determined, and is given by $\phi=\mathscr{P}_{i}\omega$.

%This path-independence can also be shown from the construction proposed in this paper since the BGG construction does not involve particular paths.

As a special case, the classical Cesar\`{o}-Volterra path integral is derived from the first Poincar\'{e} operator for the de Rham complex. The known path independence of the Cesar\`{o}-Volterra integral can thus be seen as a corollary of the known path independence of the Poincar\'{e} operator for differential $1$-forms.

The method discussed in this paper would work for any complex obtained by the BGG construction. The elasticity complex is just one special case, and more examples can be found in, e.g., \cite{arnold2015beijing,eastwood1999variations}. Therefore, Poincar\'{e} operators for these complexes can be also constructed following an analogous approach.

As future work, we hope that the methodology and the results in this paper can be useful in establishing regularity results for the elasticity complex based on estimates of regularized integral operators (c.f., \cite{costabel2010bogovskiui}) and in the investigation of Poincar\'{e} operators on manifolds or with little regularity, as studied for the Ces\`ar{o}-Volterra formula \cite{ciarlet2010cesaro,ciarlet2009cesaro}.

%The classical path integral operators for the de Rham complexes satisfy the property  $\mathfrak{p}_{i+1}\mathfrak{p}_{i}=0, ~\forall i=0,1, \cdots$. Based on the homotopy formula $d\mathfrak{p}+\mathfrak{p}d=\mathrm{id}$, this 

%We hope the new constructions in this paper can further find applications in incompatibility theories of continuum mechanics and hopefully the discussions  in this paper can be an evidence that the BGG construction is a powerful tool in mechanics and numerical applications. % We will report applications of these operators in the construction of finite elements in future work \cite{fes-elasticity}.

\section*{Acknowledgement}

The authors are grateful to Douglas Arnold and Ragnar Winther for valuable feedback that helped improve the manuscript. 

The research of KH leading to the results of this paper was partly carried out during his affiliation with the University of Oslo.  KH and ES were supported in part by the  European Research Council under the European Union's Seventh Framework Programme (FP7/2007-2013) / ERC grant agreement 339643.

%%%%%%%%%%%%%%%%%
\begin{comment}
\section*{Appendix}

We supply some tensor calculations.  Let $\epsilon_{ijk}$ be the permutation tensor and we will follow the Einstein summation convention below.
\begin{align*}
\left [\curl (x\wedge u\wedge x)\right ]_{li}&=\epsilon^{lti}\partial_{t}\left ( \epsilon_{ajk}x^{j}u^{kb}x^{c}\epsilon_{abc}\right )\\
&=\epsilon^{lti}\epsilon_{atk}u^{kb}x^{c}\epsilon_{abc}+ \epsilon^{lta}\epsilon_{ijk}x^{j}\partial_{t}u^{kb}x^{c}\epsilon_{abc} + \epsilon^{lta}\epsilon_{ijk}x^{j}u^{kb}\epsilon_{abt}\\
&=: I_{1}+I_{2}+I_{3}.
\end{align*}
For these three terms,
\begin{align*}
I_{1}=(\delta_{i}^{l}\delta_{k}^{a}-\delta_{i}^{a}\delta_{k}^{l})\epsilon_{abc}u^{kb}x^{c} = (\epsilon_{kbc}u^{kb}x^{c})\delta_{i}^{l}-\epsilon_{ibc}u^{lb}x^{c} = -\epsilon_{ibc}u^{lb}x^{c} = [-u\wedge x]_{li},
\end{align*}
\begin{align*}
I_{2}=(\delta_{b}^{l}\delta_{c}^{t}-\delta_{c}^{l}\delta_{b}^{t}) \epsilon_{ijk}x^{j}\partial_{t}u^{kb}x^{c} = \epsilon_{ijk}x^{j}\partial_{t}u^{kl}x^{t}-\epsilon_{ijk}x^{j}\partial_{b}u^{kb}x^{l} = [-(x\cdot \nabla)u\wedge x+x\otimes ((u\cdot \nabla)\wedge x)]_{li},
\end{align*}
\begin{align*}
I_{3}=(\delta_{b}^{l}\delta_{t}^{t}-\delta_{t}^{l}\delta_{b}^{t})\epsilon_{ijk}x^{j}u^{kb} = 2\epsilon_{ijk}x^{j}u^{kl}=[-2u\wedge x]_{li}.
\end{align*}
This implies that
$$
\curl(x\wedge u\wedge x)=-3u\wedge x-(x\cdot\nabla)u\wedge x+x\otimes ((u\cdot \nabla)\wedge x).
$$
Then it is obvious that $\curl(x\wedge u\wedge x)\cdot x=0$.
\end{comment}
%%%%%%%%%%%%%%%%%
\bibliographystyle{plain}      % mathematics and physical sciences
\bibliography{poincare}{}   % name your BibTeX data base

\begin{thebibliography}{10}

\bibitem{amstutz2016analysis}
Samuel Amstutz and Nicolas Van~Goethem.
\newblock Analysis of the incompatibility operator and application in intrinsic
  elasticity with dislocations.
\newblock {\em SIAM Journal on Mathematical Analysis}, 48(1):320--348, 2016.

\bibitem{amstutz2017incompatibility}
Samuel Amstutz and Nicolas Van~Goethem.
\newblock Incompatibility-governed elasto-plasticity for continua with
  dislocations.
\newblock {\em Proc. R. Soc. A}, 473(2199):20160734, 2017.

\bibitem{angoshtari2015differential}
Arzhang Angoshtari and Arash Yavari.
\newblock {Differential complexes in continuum mechanics}.
\newblock {\em Archive for Rational Mechanics and Analysis}, 216(1):193--220,
  2015.

\bibitem{arnold2015beijing}
Douglas~N Arnold.
\newblock Finite element exterior calculus and applications.
\newblock {\em Lectures at Peking University,
  http://www-users.math.umn.edu/$\sim$arnold/beijing-lectures-2015/feec-beijing-lecture5.pdf},
  2015.

\bibitem{arnold2006defferential}
Douglas~N Arnold, Richard~S Falk, and Ragnar Winther.
\newblock {Differential Complexes and Stability of Finite Element Methods II:
  The Elasticity Complex}.
\newblock In {\em Compatible spatial discretizations}, pages 47--67. Springer,
  2006.

\bibitem{Arnold.D;Falk.R;Winther.R.2006a}
Douglas~N Arnold, Richard~S Falk, and Ragnar Winther.
\newblock {Finite element exterior calculus, homological techniques, and
  applications}.
\newblock {\em Acta numerica}, 15:1, May 2006.

\bibitem{bernstein1975differential}
Joseph~N Bernstein, Israel~M Gelfand, and Sergei~I Gelfand.
\newblock {Differential operators on the base affine space and a study of
  g-modules}.
\newblock {\em Lie groups and their representations (Proc. Summer School,
  Bolyai J{\'a}nos Math. Soc., Budapest, 1971)}, pages 21--64, 1975.

\bibitem{boffi2011discrete}
Daniele Boffi, Martin Costabel, Monique Dauge, Leszek Demkowicz, and Ralf
  Hiptmair.
\newblock Discrete compactness for the p-version of discrete differential
  forms.
\newblock {\em SIAM journal on numerical analysis}, 49(1):135--158, 2011.

\bibitem{bonito2013regularity}
Andrea Bonito, Jean-Luc Guermond, and Francky Luddens.
\newblock {Regularity of the Maxwell equations in heterogeneous media and
  Lipschitz domains}.
\newblock {\em Journal of Mathematical Analysis and applications},
  408(2):498--512, 2013.

\bibitem{vcap2001bernstein}
Andreas {\v{C}}ap, Jan Slov{\'a}k, and Vladim{\'\i}r Sou{\v{c}}ek.
\newblock {Bernstein-Gelfand-Gelfand sequences}.
\newblock {\em Annals of Mathematics}, 154(1):97--113, 2001.

\bibitem{cesaro1906sulle}
Ernesto Ces{\`a}ro.
\newblock Sulle formole del {Volterra} fondamentali nella teoria delle
  distorsioni elastiche.
\newblock {\em Il Nuovo Cimento (1901-1910)}, 12(1):143--154, 1906.

\bibitem{ciarlet2013linear}
Philippe~G Ciarlet.
\newblock {\em Linear and nonlinear functional analysis with applications},
  volume 130.
\newblock SIAM, 2013.

\bibitem{ciarlet2007characterization}
Philippe~G Ciarlet, Patrick Ciarlet, Giuseppe Geymonat, and Fran{\c{c}}oise
  Krasucki.
\newblock Characterization of the kernel of the operator {CURL CURL}.
\newblock {\em Comptes Rendus Mathematique}, 344(5):305--308, 2007.

\bibitem{ciarlet2009intrinsic}
Philippe~G Ciarlet, Liliana Gratie, and Cristinel Mardare.
\newblock Intrinsic methods in elasticity: A mathematical survey.
\newblock {\em Discrete and Continuous Dynamical Systems}, 2009.

\bibitem{ciarlet2010cesaro}
Philippe~G Ciarlet, Liliana Gratie, and Cristinel Mardare.
\newblock A {Ces{\`a}ro--Volterra} formula with little regularity.
\newblock {\em Journal de math{\'e}matiques pures et appliqu{\'e}es},
  93(1):41--60, 2010.

\bibitem{ciarlet2009cesaro}
Philippe~G Ciarlet, Liliana Gratie, and Michele Serpilli.
\newblock {Ces{\`a}ro--Volterra} path integral formula on a surface.
\newblock {\em Mathematical Models and Methods in Applied Sciences},
  19(03):419--441, 2009.

\bibitem{costabel2010bogovskiui}
Martin Costabel and Alan McIntosh.
\newblock On {{Bogovski\u\i} and regularized Poincar\'e integral operators for
  de Rham complexes on Lipschitz domains}.
\newblock {\em Mathematische Zeitschrift}, 265(2):297--320, 2010.

\bibitem{eastwood1999variations}
Michael Eastwood.
\newblock {Variations on the de Rham complex}.
\newblock {\em Notices AMS}, 46:1368--1376, 1999.

\bibitem{eastwood2000complex}
Michael Eastwood.
\newblock {A complex from linear elasticity}.
\newblock In {\em Proceedings of the 19th Winter School" Geometry and
  Physics"}, pages 23--29. Circolo Matematico di Palermo, 2000.

\bibitem{falk2008finite}
Richard~S Falk.
\newblock Finite element methods for linear elasticity.
\newblock {\em Lecture Notes in Mathematics}, 1939:159, 2008.

\bibitem{geymonat2007kinematics}
Giuseppe Geymonat, Fran{\c{c}}oise Krasucki, and Michele Serpilli.
\newblock The kinematics of plate models: a geometrical deduction.
\newblock {\em Journal of Elasticity}, 88(3):299--309, 2007.

\bibitem{hackl1988existence}
Klaus Hackl and Udo Zastrow.
\newblock On the existence, uniqueness and completeness of displacements and
  stress functions in linear elasticity.
\newblock {\em Journal of elasticity}, 19(1):3--23, 1988.

\bibitem{hiptmair1999canonical}
Ralf Hiptmair.
\newblock Canonical construction of finite elements.
\newblock {\em Mathematics of Computation of the American Mathematical
  Society}, 68(228):1325--1346, 1999.

\bibitem{khavkine2017calabi}
Igor Khavkine.
\newblock {The Calabi complex and Killing sheaf cohomology}.
\newblock {\em Journal of Geometry and Physics}, 113:131--169, 2017.

\bibitem{kroner1963dislocation}
Ekkehart Kr{\"o}ner.
\newblock {Dislocation: a new concept in the continuum theory of plasticity}.
\newblock {\em Studies in Applied Mathematics}, 42(1-4):27--37, 1963.

\bibitem{kroner1981continuum}
Ekkehart Kr{\"o}ner.
\newblock Continuum theory of defects.
\newblock {\em Physics of defects}, 35:217--315, 1981.

\bibitem{maggiani2015comp}
Giovanni~Battista Maggiani, Riccardo Scala, and Nicolas Van~Goethem.
\newblock A compatible-incompatible decomposition of symmetric tensors in
  {$L^p$} with application to elasticity.
\newblock {\em Math. Methods Appl. Sci.}, 38(18):5217--5230, 2015.

\bibitem{Taylor2010}
Michael~E. Taylor.
\newblock {\em Partial Differential Equations I: Basic Theory (Applied
  Mathematical Sciences)}.
\newblock Springer, 2010.

\bibitem{van2010non}
Nicolas Van~Goethem.
\newblock The non-{R}iemannian dislocated crystal: a tribute to {E}kkehart
  {K}r\"oner (1919--2000).
\newblock {\em J. Geom. Mech.}, 2(3):303--320, 2010.

\bibitem{van2016frank}
Nicolas Van~Goethem.
\newblock The {F}rank tensor as a boundary condition in intrinsic linearized
  elasticity.
\newblock {\em J. Geom. Mech.}, 8(4):391--411, 2016.

\bibitem{van2012distributional}
Nicolas Van~Goethem and Fran{\c{c}}ois Dupret.
\newblock A distributional approach to {2D Volterra} dislocations at the
  continuum scale.
\newblock {\em European Journal of Applied Mathematics}, 23(3):417--439, 2012.

\bibitem{volterra1907equilibre}
Vito Volterra.
\newblock Sur l'{\'e}quilibre des corps {\'e}lastiques multiplement connexes.
\newblock In {\em Annales scientifiques de l'{\'E}cole normale sup{\'e}rieure},
  volume~24, pages 401--517. Soci{\'e}t{\'e} math{\'e}matique de France, 1907.

\bibitem{yavari2013compatibility}
Arash Yavari.
\newblock Compatibility equations of nonlinear elasticity for
  non-simply-connected bodies.
\newblock {\em Archive for Rational Mechanics and Analysis}, 209(1):237--253,
  2013.

\end{thebibliography}

\end{document}